\documentclass[twocolumn]{autart}


\usepackage{amsmath} 
\usepackage{amssymb}  

\usepackage{bm}
\usepackage{xcolor}
\usepackage{comment}
\usepackage[font=small]{caption}
\usepackage{graphicx} 
\usepackage[font=small]{subfig}
\usepackage{siunitx}


\newtheorem{theorem}{Theorem}[section]
\newtheorem{proposition}{Proposition}[section]
\newtheorem{lemma}{Lemma}[section]
\newtheorem{assumption}{Assumption}[section]
\newtheorem{corollary}{Corollary}[section]
\newtheorem{definition}{Definition}[section]
\newtheorem{remark}{Remark}[section]

\newenvironment{proof}
     {\textit{Proof:}}{\hfill $\blacksquare$\\}
\makeatletter
\let\c@author\relax
\makeatother
\usepackage[maxbibnames=6,giveninits=true,sorting=none]{biblatex} 

\addbibresource{bib_files/naveed_system_ID.bib}
\AtEveryBibitem{%
  \clearfield{issn} 
  \clearfield{doi} 
  \clearfield{url}
   \clearfield{number}
}
\DeclareFieldFormat*{year}{{}}
\renewbibmacro{in:}{} 

\newcommand{\tr}[1]{{#1}^{\ensuremath{\mathsf{T}}}} 
\newcommand{\inv}[1]{{#1}^{\ensuremath{\mathsf{-1}}}} 

\usepackage{xcolor}
\newcommand{\mxx}[1]{{\color{blue}#1\ }}  
\newcommand{\todo}[1]{{\color{orange}#1\ }}  

\begin{document}
\begin{frontmatter}
\title{An Information-State Based Approach to Linear Time Varying System Identification and Control}



\author[tamu]{Mohamed Naveed Gul Mohamed}\ead{naveed@tamu.edu},  
\author[parc]{Raman Goyal}\ead{ramaniitr.goyal92@gmail.com},
\author[tamu]{Suman Chakravorty\thanksref{cauthor}}\ead{schakrav@tamu.edu},
\author[rockwell]{Ran Wang}\ead{rwang0417@tamu.edu}

\address[tamu]{Texas A\&M University, College Station, Texas, USA} 
\address[parc]{Palo Alto Research Center, part of SRI International, California, USA}
\address[rockwell]{Rockwell Automation, Texas, USA}

\thanks[cauthor]{Corresponding author: Suman Chakravorty, Tel. +979-458-0064. }

\begin{keyword}
System Identification, Linear systems, Time-varying linear systems, ARMA model, Information-state
\end{keyword}
\begin{abstract}
This paper considers the problem of system identification for linear time varying systems. We propose a new system realization approach that uses an ``information-state" as the state vector, where the ``information-state" is composed of a finite number of past inputs and outputs. The system identification algorithm uses input-output data to fit an autoregressive moving average model (ARMA) to represent the current output in terms of finite past inputs and outputs. 
This information-state-based approach allows us to directly realize a state-space model using the estimated time varying ARMA paramters  linear time varying (LTV) systems. The paper develops the theoretical foundation for using ARMA parameters-based system representation using only the concept of linear observability, details the reasoning for exact output modeling using only the finite history, and shows that there is no need to separate the free and the forced response for identification. The paper also discusses the implications of using the information-state system for optimal output feedback control and shows that the solution obtained using a suitably posed information state problem is optimal for the original problem. The proposed approach is tested on various different systems, and the performance is compared with state-of-the-art LTV system identification techniques.
\end{abstract}

\end{frontmatter}

\section{Introduction}
In this paper, we consider the system identification of linear systems. The motivation for the work arose while trying to accomplish the data-based optimal control of nonlinear systems where we had introduced a linear time-varying (LTV) system identification technique based on an autoregressive moving average (ARMA) model of the input-output map of an LTV system \cite{Ran_ICRA21}. Nonlinear dynamical systems can be approximated either as a linear time-invariant (LTI) system around an equilibrium or a linear time-varying (LTV) system around a trajectory. System identification may need to be done to design control laws for such systems~\cite{ljung1998system,juang1994applied}. System identification techniques have also been used for parameter identification and for generating reduced-order models
\cite{juang1985eigensystem,shokoohi1987identification}. The technique discussed in this paper, building upon the preliminary idea introduced in \cite{Ran_ICRA21}, uses a system realization that is based on the ``information-state" as the state vector. An ARMA model which can represent the current output in terms of inputs and outputs from $q$ steps in the past, is found by solving a linear regression problem relating the input and output data. Defining the state vector to be the past inputs and outputs, as the information-state, lets us realize a state-space model directly using the estimated time-varying ARMA parameters.\\

The pioneering work in system identification for LTI systems is the Ho-Kalman realization theory \cite{HOKALMAN} of which the Eigensystem Realization Algorithm (ERA) algorithm is one of the most popular ~\cite{juang1985eigensystem}. Another system identification method, namely, $q$-Markov covariance equivalent realization, generates a stable LTI system model that matches the first ``$q$" Markov parameters of the underlying system and also matches the equivalent steady-state covariance response/ parameters of the identified system \cite{majji2018timeCOVER,king1988generalized}. These algorithms assume stable systems so that the response can be modeled using a finite set of parameters relating the past inputs to the current output (moving-average (MA) model). For lightly damped and marginally stable systems, the length of history to be considered and the parameters to be estimated becomes very long, leading to numerical issues when solving for the parameters. To overcome this issue, the observer Kalman identification algorithm (OKID)~\cite{OKID} uses an ARMA model, rather than an MA model, consisting of past outputs and controls to model the current output. The time-varying counterparts of the ERA and OKID - TV-ERA and TV-OKID - were developed in~\cite{TV_ERA} and~\cite{TV_OKID}, respectively. The identification of time varying linear systems (TV-ERA and TV-OKID) also builds on the earlier work on time-varying discrete time system identification \cite{shokoohi1987identification, VERHAEGEN1995201}. The OKID and TV-OKID explain the usage of an ARMA model to be equivalent to an observer in the loop system, and postulate that the identified observer is a deadbeat observer similar to the work in~\cite{juang1997deadbeat}. \\

All the algorithms mentioned above start by estimating the Markov parameters of the system from input-output data: the ERA type techniques directly from the MA model and the OKID methods from recursively solving a set of linear equations involving the ARMA/ observer Markov parameters. A Hankel matrix is built using the estimated Markov parameters, and a singular value decomposition (SVD) of the Hankel matrix is used to find the system matrices. All the approaches either require the experiments to be performed from zero-initial conditions or wait sufficiently long enough for the initial condition response to die out. The time-varying systems also need an additional step of coordinate transformation, as the realized matrices at different time steps are in different coordinate frames~\cite{TV_ERA}. These Hankel SVD-based approaches for LTV systems also require separate experiments - forced response with zero-initial conditions and a free-response with random initial conditions, where the free response experiments are required to identify the system in the first and last few steps.\\ 

The Information-state contruct is well known (for instance, see \cite[Ch.5]{Bert05}, \cite[Ch.6]{kumar1986stochastic} and the references therein) in the context of optimal control for partially observed LTI systems. However, the salience of the information state with regard to system realization and identification theory was, to the best of our knowledge, not carefully examined. In particular, the information state based identification approach provides a simple yet powerful alternative to traditional realization/ identification approaches as outlined above, in the sense that it allows us to consider the free and forced response together, especially in the context of LTV problems. Perhaps more importantly, assuming that the ultimate purpose of identification is control, the reason for using a finite history of outputs and inputs as the information-state, including the minimum length of such data required, to represent the state of the system, and the optimality of the synthesized feedback control using this finite description, was not considered.\\
\\
\textbf{Contributions:}
The main contributions of this work are (i) the usage of ``information-state" as the state vector, which helps to formulate the state-space realization directly from the ARMA parameters for both time-invariant and time-varying cases without the need for the formation of a Hankel matrix and its SVD. The time-varying case is especially interesting, as the system realization method is exactly the same as the time-invariant case, and doesn't require any coordinate transformations as required in the previous literature~\cite{TV_ERA,TV_OKID};  (ii) a new explanation on why the ARMA model can predict the output from a finite history of inputs and outputs for any linear system based on linear observability, without recourse to a hypothesized deadbeat observer as in \cite{OKID,TV_OKID}; (iii) the approach avoids the need to perform separate experiments for the forced response from zero initial conditions and the free-response from random initial conditions, i.e., there is no need to separate the forced and the free response for identification; and (iv) the implications of using the information-state for optimal output feedback control, where, we show that the solution of a suitably posed optimal control problem using the information-state is indeed optimal for the original output feedback problem. The current manuscript expands on our previously published conference paper \cite{informationstateACC}. In addition to providing a more detailed theoretical development, two important questions are addressed in this paper that were absent in the conference version: (i) the equivalence of the information-state based state-space model and the original state-space system for optimal feedback control (Section~\ref{sec.control_implications}); (ii) identification of the ARMA parameters under noise (Section~\ref{sec.noise}). A numerical experiment to show the equivalence has also been added in Section~\ref{sec.equivalence_exp}. \\
\\
The rest of the paper is structured as follows: Section~\ref{section:prob} introduces the problem; Section~\ref{section:arma_ltv} develops the theory for the information-state approach and provides new insight for the ARMA parameters; Section \ref{sec.control_implications} discusses the implications in optimal feedback control using information-state; Section \ref{sec.noise} discusses the estimation of the ARMA parameters under noise; Owing to the similarities of our work with the OKID/ TV-OKID approaches, Section~\ref{section:okid} is dedicated to showing the relationship and differences between the approaches, in particular, the non-necessity of using an observer in the loop to explain the ARMA model; In Section~\ref{section:results}, we show examples where we apply the approach to a true LTV system and also show its capability of identifying nonlinear systems along a finite trajectory; while Section~\ref{section:Conclusion} draws conclusions about our work.

\section{Problem Formulation}\label{section:prob}
Consider a linear time-varying system given as:
\begin{subequations}\label{eq.system}
    \begin{align}
    x_{t+1} &= A_{t} x_{t} + B_{t} u_{t},  \\
    z_t &= C_{t} x_{t}, 
\end{align}
\end{subequations}

where $x_t \in \mathbb{R}^{n}$ is the state, $u_t \in \mathbb{R}^{r}$ is the input, and $z_t \in \mathbb{R}^{m}$ is the output of the system, defined for $\forall\ t \geq 0$. 
Given the input-output data from such a system with unknown initial conditions, the problem of system identification deals with finding matrices $\mathcal{A}_t$, $\mathcal{B}_t$, $\mathcal{C}_t$ such that the new system:
\begin{align}
    \tilde{x}_{t+1} &= \mathcal{A}_{t} \tilde{x}_{t} + \mathcal{B}_{t} u_{t}, \nonumber \\
    z_t &= \mathcal{C}_{t} \tilde{x}_{t}, \label{eq.id_system}
\end{align}
has the same input-output and transient response as the original system (given in Eq.~\eqref{eq.system}). The dimension of state $\tilde{x}_t$ for the identified system need not be necessarily the same as the dimension of state $x_t$, and thus the dimension of the system matrices $\mathcal{A}_{t}, \mathcal{B}_{t}$ and $\mathcal{C}_{t}$ can also be different from the underlying system matrices. 

\section{Information-State based System Identification: Theory}\label{section:arma_ltv}

In this section, we shall present the theoretical foundation for the information-state based approach for the identification of linear time-varying systems described in Eq.~\eqref{eq.system}. 

\begin{definition}
\textbf{Information-state}. The information-state \cite[Ch.5]{Bert05}, \cite[Ch.6]{kumar1986stochastic} of the system in Eq.~\eqref{eq.system} of order $``q"$ (at time $t$) is defined as 
$$\mathcal{Z}_t = \tr{[\tr{z_{t}}, \tr{z_{t-1}}, \cdots, \tr{z_{t-q+1}}, \tr{u_{t-1}}, \tr{u_{t-2}}, \cdots, \tr{u_{t-q+1}}]},$$ where, $\mathcal{Z}_t \in \mathbb{R}^{mq + r(q-1)}$  
\end{definition}
First, we show the following basic result that is key to the entire development. 
\begin{assumption}\label{assump.observability}
\textit{Observability.}
We assume that the system in Eq.~\eqref{eq.system} is uniformly observable for all time $t$, i.e., the observability matrix:
\begin{equation}
    O^q_{t-1} =
\begin{bmatrix}
 C_{t-1} A_{t-2} \cdots A_{t-q}\\
 C_{t-2} A_{t-3} \cdots A_{t-q}\\
 \vdots \\
 C_{t-q}\\
\end{bmatrix}, \label{eq.obsv}
\end{equation}
is rank $n$ for any $q$ such that $mq \geq n$. 
\end{assumption}

\begin{proposition} \label{prop.1}
Given the linear system in Eq.~\eqref{eq.system}, under Assumption~\ref{assump.observability}, there exists an ARMA model such that:
\begin{align}
    z_t &= \alpha^{(q)}_{t,t-1} z_{t-1} + \alpha^{(q)}_{t,t-2} z_{t-2} +\cdots +   \alpha^{(q)}_{t,t-q} z_{t-q} \nonumber \\ 
    &+\beta^{(q)}_{t,t-1} u_{t-1} + \beta^{(q)}_{t,t-2} u_{t-2} +\cdots + \beta^{(q)}_{t,t-q} u_{t-q}, \label{eq.TV_ARMA}
\end{align}
for any $q\leq t$, and for time $t$ such that $mt \geq n$.
\end{proposition}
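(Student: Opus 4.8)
The plan is to reconstruct the single state $x_{t-q}$ from the finite history contained in the past outputs and inputs, and then propagate it forward one window to recover $z_t$. First I would introduce the state-transition matrix $\Phi(k,j) = A_{k-1}A_{k-2}\cdots A_j$ for $k>j$, with $\Phi(j,j)=I$, so that the solution of Eq.~\eqref{eq.system} reads $x_k = \Phi(k,j)x_j + \sum_{i=j}^{k-1}\Phi(k,i+1)B_i u_i$. Applying $C_{t-\ell}$ to $x_{t-\ell}$ for $\ell = 1,\dots,q$ expresses each past output as a free-response term acting on the common state $x_{t-q}$ plus a forced-response term built from the intervening inputs $u_{t-q},\dots,u_{t-2}$. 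Stacking these $q$ block-rows yields
\[
\begin{bmatrix} z_{t-1}\\ z_{t-2}\\ \vdots\\ z_{t-q}\end{bmatrix}
= O^q_{t-1}\, x_{t-q} + T_{t}\begin{bmatrix} u_{t-2}\\ u_{t-3}\\ \vdots\\ u_{t-q}\end{bmatrix},
\]
where $O^q_{t-1}$ is exactly the observability matrix in Eq.~\eqref{eq.obsv} (its bottom block row $C_{t-q}$ recovers $z_{t-q}=C_{t-q}x_{t-q}$, its top block row the free part of $z_{t-1}$), and $T_t$ is a block lower-triangular matrix whose entries are the products $C_{t-\ell}\Phi(t-\ell,i+1)B_i$.

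The key step is to invert this relation for the state. By Assumption~\ref{assump.observability}, for any $q$ with $mq \geq n$ (and $q\leq t$, so that the history does not run past the initial time and $t-q\geq 0$) the matrix $O^q_{t-1}$ has full column rank $n$ and hence admits a left inverse $(O^q_{t-1})^{+} = \inv{(\tr{(O^q_{t-1})}O^q_{t-1})}\tr{(O^q_{t-1})}$. Multiplying on the left isolates
\[
x_{t-q} = (O^q_{t-1})^{+}\left(\begin{bmatrix} z_{t-1}\\ \vdots\\ z_{t-q}\end{bmatrix} - T_t\begin{bmatrix} u_{t-2}\\ \vdots\\ u_{t-q}\end{bmatrix}\right),
\]
which writes $x_{t-q}$ as an explicit linear combination of the past outputs $z_{t-1},\dots,z_{t-q}$ and the past inputs $u_{t-2},\dots,u_{t-q}$.

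Finally I would propagate this reconstructed state forward to time $t$. Writing $x_t = \Phi(t,t-q)x_{t-q} + \sum_{i=t-q}^{t-1}\Phi(t,i+1)B_i u_i$ and applying $C_t$ gives
\[
z_t = C_t\Phi(t,t-q)x_{t-q} + \sum_{i=t-q}^{t-1} C_t\Phi(t,i+1)B_i u_i.
\]
Substituting the expression for $x_{t-q}$ from the previous step collects $z_t$ into a linear combination of $z_{t-1},\dots,z_{t-q}$ and $u_{t-1},\dots,u_{t-q}$. Reading off the coefficient blocks then defines the ARMA parameters: the $\alpha^{(q)}_{t,t-\ell}$ arise from $C_t\Phi(t,t-q)(O^q_{t-1})^{+}$, while the $\beta^{(q)}_{t,t-\ell}$ arise from the direct forced-response terms $C_t\Phi(t,i+1)B_i$ (the $u_{t-1}$ term coming purely from $C_tB_{t-1}$) together with the $T_t$ contribution carried through the left inverse. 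This establishes Eq.~\eqref{eq.TV_ARMA}.

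I expect the main obstacle to be organizational rather than conceptual: carefully tracking the time indices and the forced-response products so that the two sources of input dependence — the direct propagation from $t-q$ to $t$ and the indirect dependence hidden inside the reconstruction of $x_{t-q}$ — combine into a single clean set of $\beta$ coefficients on $u_{t-1},\dots,u_{t-q}$. A secondary point worth flagging is that the ARMA coefficients need not be unique: the left inverse of $O^q_{t-1}$ is unique only when $mq=n$, so when $mq>n$ any valid left inverse produces an admissible representation. Existence, which is all the proposition asserts, follows directly from the full-column-rank guarantee of Assumption~\ref{assump.observability}.
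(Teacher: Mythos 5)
Your proposal is correct and follows essentially the same route as the paper: stack the past $q$ outputs as $Z^q_{t-1}=O^q_{t-1}x_{t-q}+G^q_{t-1}U^q_{t-1}$, use the full column rank of $O^q_{t-1}$ (Assumption~\ref{assump.observability}) to solve for $x_{t-q}$ via a left/pseudo-inverse, and substitute into the forward propagation $z_t=C_tA_{t-1}\cdots A_{t-q}x_{t-q}+[C_tB_{t-1},\dots]U^q_{t-1}$ to read off the $\alpha$ and $\beta$ blocks exactly as in Eq.~\eqref{eq.true_ls_sol}. The only cosmetic differences are your choice of the particular left inverse $\inv{(\tr{(O^q_{t-1})}O^q_{t-1})}\tr{(O^q_{t-1})}$ where the paper uses the Moore--Penrose inverse (identical under full column rank), and your trimming of the zero $u_{t-1}$ column from the forced-response matrix.
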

\begin{proof}
Note that the outputs of the system can be written as shown in Eq.~\eqref{eq.output_model} (next page). From Eq.~\eqref{eq.output_model}, it follows that (under Assumption~\ref{assump.observability} and for $mt\geq n$): 
$$x_{t-q} = (O_{t-1}^q)^{\dagger} (Z^q_{t-1} - G_{t-1}^q U^q_{t-1}),$$
since $O_{t-1}^q$ has rank $n$ owing to Assumption~\ref{assump.observability}. The symbol $\dagger$ represents the Moore-Penrose inverse. Then, 
\begin{align}
    z_t &= C_{t}A_{t-1}...A_{t-q}x_{t-q}\nonumber\\
    &+[C_t B_{t-1}, \cdots, C_t A_{t-1} \cdots B_{t-q}]U^q_{t-1},\\
    z_t &=C_{t}A_{t-1}...A_{t-q}(O_{t-1}^q)^{\dagger}Z^q_{t-1} + ([C_t B_{t-1}, \cdots,\nonumber \\ 
    C_t& A_{t-1} \cdots B_{t-q}] -C_{t}A_{t-1}...A_{t-q}(O_{t-1}^q)^{\dagger} G_{t-1}^q)U^q_{t-1} . \label{eq.true_arma}
\end{align}
From Eq.~\eqref{eq.true_arma}, it is clear that the ARMA model is given by Eq.~\eqref{eq.TV_ARMA}
where, 
\begin{align}
    [\alpha_{t,t-1}^{(q)} \cdots \alpha_{t,t-q}^{(q)}] &= C_{t}A_{t-1}...A_{t-q}(O_{t-1}^q)^{\dagger}, \nonumber \\
    [\beta_{t,t-1}^{(q)} \cdots \beta_{t,t-q}^{(q)}] &= ([C_t B_{t-1}, \cdots, C_t A_{t-1} \cdots B_{t-q}] \nonumber \\ &-C_{t}A_{t-1}...A_{t-q}(O_{t-1}^q)^{\dagger} G_{t-1}^q). \label{eq.true_ls_sol}
\end{align}
This concludes the proof.
\end{proof}
\begin{table*}[h]
\begin{equation}
    \setcounter{MaxMatrixCols}{20}
    \underbrace{
    \begin{bmatrix}
     z_{t-1} \\
     z_{t-2} \\
     \vdots \\
     z_{t-q}
    \end{bmatrix}}_{Z^q_{t-1}} =
    \underbrace{
    \begin{bmatrix}
     C_{t-1} A_{t-2}\cdots A_{t-q}\\
     C_{t-2} A_{t-3}\cdots A_{t-q}\\
     \vdots\\
     C_{t-q}
    \end{bmatrix}}_{O_{t-1}^q} x_{t-q}+ 
    \underbrace{
    \begin{bmatrix}
  0 & C_{t-1}B_{t-2} & \cdots  & \cdots  & C_{t-1}A_{t-2}...B_{t-q}  \\ 
  0 & 0  & C_{t-2}B_{t-3}  & \cdots  &   C_{t-2}A_{t-3}...B_{t-q} \\ 
  \vdots & \vdots &  \ddots & \cdots  & \vdots  \\
  0&0 &0 & \cdots  & 0  \end{bmatrix}}_{G_{t-1}^q}
  \underbrace{
  \begin{bmatrix}
    u_{t-1}\\
    u_{t-2} \\
    \vdots\\
    u_{t-q}
  \end{bmatrix}}_{U^q_{t-1}}\label{eq.output_model}.
\end{equation}
\end{table*}
Next, we show that the above time-varying ARMA (``TV-ARMA") model is capable of predicting the output $z_t$ for any time $t$ s.t. $mt \geq n$.
\begin{corollary} \label{corollary.1}
For any time $t$, s.t. $mt \geq n$, the TV-ARMA model (Eq.~\eqref{eq.TV_ARMA}) can perfectly predict the output $``z_t"$ of the  linear system (Eq.~\eqref{eq.system}).
\end{corollary}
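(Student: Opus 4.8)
The plan is to derive the Corollary as a direct consequence of Proposition~\ref{prop.1} by making an appropriate choice of the model order $q$ at each time $t$. The only free quantity is $q$, and the argument reduces to verifying that, whenever $mt \geq n$, there always exists an admissible order that simultaneously (i) respects the availability of past data and (ii) renders the observability matrix $O_{t-1}^q$ of full column rank, so that the exact identity underlying Eq.~\eqref{eq.true_arma} goes through.

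First I would fix a time $t$ with $mt \geq n$ and identify the range of admissible orders. An order $q$ is admissible if $q \leq t$, so that the regressors $z_{t-1},\dots,z_{t-q}$ and $u_{t-1},\dots,u_{t-q}$ (and the past state $x_{t-q}$ with $t-q \geq 0$) are well defined, and if $mq \geq n$, so that Assumption~\ref{assump.observability} forces $O_{t-1}^q$ to have rank $n$. The admissible set is therefore $\lceil n/m\rceil \le q \le t$, which is nonempty precisely because $mt \ge n$; the canonical choice is $q=t$, i.e.\ using the entire available history back to the initial state $x_0$.

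Next, with any such $q$ fixed, I would invoke Proposition~\ref{prop.1}. Since $O_{t-1}^q$ has full column rank $n$, the pseudo-inverse recovers the past state exactly, $(O_{t-1}^q)^{\dagger}O_{t-1}^q x_{t-q}=x_{t-q}$, and the substitution producing Eq.~\eqref{eq.true_arma} is an exact equality rather than an approximation. Consequently Eq.~\eqref{eq.TV_ARMA} holds with equality for this $t$, which is exactly the assertion that the TV-ARMA model predicts $z_t$ perfectly.

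The point that deserves the most care --- and is really the content of the Corollary beyond Proposition~\ref{prop.1} --- is the behaviour at the first few time steps. For $t$ smaller than the nominal steady-state order one cannot use that fixed order, but one may instead take the shortened order $q=t$; exact recovery of $x_{t-q}=x_0$ is preserved because the full-column-rank property depends only on $mq \ge n$, which coincides with the standing hypothesis $mt \ge n$ at the boundary $t_0=\lceil n/m\rceil$. Thus perfect prediction holds from the earliest feasible instant onward, with a possibly time-varying order, and no separate free-response experiment is needed to cover the initial transient. I expect this index/rank bookkeeping under the shortened history to be the main obstacle to argue cleanly, since everything else is an immediate appeal to Proposition~\ref{prop.1}.
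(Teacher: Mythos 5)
Your proposal is correct and follows essentially the same route as the paper: the corollary is read off from Proposition~\ref{prop.1} by noting that the ARMA identity in Eq.~\eqref{eq.true_arma} is exact once $O_{t-1}^{q}$ has rank $n$, which an admissible choice of $q$ (with $mq\geq n$ and $q\leq t$, guaranteed to exist since $mt\geq n$) provides. The paper's own proof is just a terser version of this; your extra bookkeeping on the admissible range of $q$ and the boundary case $q=t$ is consistent with, and slightly more explicit than, what the authors write.
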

\begin{proof}
By construction, TV-ARMA (Eq.~\eqref{eq.TV_ARMA}) is capable of predicting $z_t$. Noting that $O_{t-1}^q$ becomes rank ``$n$" for time $t$, for $mt\geq n$, the result follows.
\end{proof}
Next, we show how to find the TV-ARMA model of order $q$ by setting up the problem. 
We may write
\begin{align}
&
\underbrace{
\begin{bmatrix}
  z_t^{(1)} ~  \cdots ~ z_t^{(N)} 
\end{bmatrix}}_{Z_t^{(1:N)}} =
\underbrace{
\begin{bmatrix}
\alpha^{(q)}_{t,t-1}~\cdots~\alpha^{(q)}_{t,t-q}
\end{bmatrix}}_{\bar{\alpha}_t}
\begin{bmatrix}
z_{t-1}^{(1)} & \cdots & z_{t-1}^{(N)} \\
z_{t-2}^{(1)} & \cdots & z_{t-2}^{(N)} \\
\vdots & \ddots & \vdots \\
z_{t-q}^{(1)} & \cdots & z_{t-q}^{(N)} \\
\end{bmatrix}
 \nonumber\\ &~~~~~~+ 
 \begin{bmatrix}
\underbrace{\beta^{(q)}_{t,t-1}~\cdots~\beta^{(q)}_{t,t-q}}_{\bar{\beta}_t}
\end{bmatrix}
\begin{bmatrix}
u^{(1)}_{t-1} & \cdots & u^{(N)}_{t-1}\\
u^{(1)}_{t-2} & \cdots & u^{(N)}_{t-2}\\
\vdots & \ddots & \vdots \\
u^{(1)}_{t-q} & \cdots & u^{(N)}_{t-q}\\
\end{bmatrix}
,\label{eq.ls_eq}
\end{align}
\noindent where ${z^{(i)}_{t}, u^{(i)}_{t}}$ represent the observation and inputs for the $``i^{\text{th}}"$ rollout/experiment. $N$ is the total number of rollouts and is chosen such that $N > mq + rq.$\\
The order of the state-space model can then be found using the following approach. Let the concatenated data matrix be represented as :
\begin{align}
    \mathcal{X}^q_t = 
    \begin{bmatrix}
    z_{t-1}^{(1)} & \cdots & z_{t-1}^{(N)} \\
    \vdots & \ddots & \vdots \\
    z_{t-q}^{(1)} & \cdots & z_{t-q}^{(N)}\\
     u^{(1)}_{t-1} & \cdots & u^{(N)}_{t-1} \\
     \vdots & \ddots & \vdots \\
     u^{(1)}_{t-q} & \cdots & u^{(N)}_{t-q} 
    \end{bmatrix} \in \mathbb{R}^{mq+rq \times N}. \label{eq.data_matrix}
\end{align}

The data matrix $\mathcal{X}^q_t$ is a full row rank matrix for $mq < n$, with $n$ being the order of the underlying system. Therefore, the minimum order of the ARMA parameters $q$ is obtained by increasing the order $q$ till the data matrix $\mathcal{X}^q_t$ becomes row rank deficient. 
In particular, we can do an SVD of $\mathcal{X}^q_t$ for some suitably large $q$. The rank of the $\mathcal{X}^q_t$ is going to be $``n+ rq"$, and thus, the minimal order ARMA model can be found by setting $q^*$ s.t. $mq^* \geq n$. In particular, for single-output system $(m=1)$, the minimal order $q^* = n$. However, in general owing to the fact that $m$ and $n$ need not be integer multiples, i.e., $n \neq mq$ for some integer $q$, the order $q$ of the ARMA model is going to be larger than the state order $n$.  The above development may now be summarized as follows.  
\begin{proposition}
Given a data matrix $\mathcal{X}^q_t$ of order $q$ as in Eq.~\eqref{eq.data_matrix}, with $mq > n$, the minimal order ARMA model for system (Eq.~\eqref{eq.system}) is given by choosing $q^*$ s.t. $mq^* \geq n$, where $(n+rq)$ is the rank of $\mathcal{X}^q_t$.
\end{proposition}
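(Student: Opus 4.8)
The plan is to exhibit $\mathcal{X}^q_t$ as a product of a structured matrix built from the system matrices and a ``data'' matrix of stacked initial states and input histories, and then read off the rank from the two factors. Applying Eq.~\eqref{eq.output_model} columnwise across the $N$ rollouts, I would first write
\[
\mathcal{X}^q_t =
\begin{bmatrix} O^q_{t-1} & G^q_{t-1} \\ 0 & I_{rq} \end{bmatrix}
\begin{bmatrix} X_{t-q} \\ U^q_{t-1} \end{bmatrix},
\]
where $X_{t-q} = [x^{(1)}_{t-q}, \cdots, x^{(N)}_{t-q}] \in \mathbb{R}^{n \times N}$ collects the unknown initial states and the lower block collects the stacked inputs over all $N$ rollouts. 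Denote the left (system) factor by $T$ and the right (data) factor by $D$; the whole argument reduces to computing the ranks of $T$ and $D$ separately.

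Second, I would compute $\mathrm{rank}(T)$ using its block structure. Because the lower-right block is $I_{rq}$ and the lower-left block is zero, the last $rq$ columns of $T$ are independent, and any vector in the span of the first $n$ columns has a zero lower block, so the two column spaces intersect trivially; hence $\mathrm{rank}(T) = \mathrm{rank}(O^q_{t-1}) + rq$. Under Assumption~\ref{assump.observability}, for any $q$ with $mq \geq n$ the observability matrix $O^q_{t-1}$ has rank $n$, so $T$ has full column rank $n + rq$.

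Third, I would argue that $D \in \mathbb{R}^{(n+rq) \times N}$ has full row rank $n + rq$. Since $N > mq + rq \geq n + rq$, the dimensions permit this, and it holds provided the rollouts are sufficiently exciting, i.e., the initial states and input histories jointly span $\mathbb{R}^{n+rq}$. With $T$ of full column rank, left-multiplication by $T$ preserves the rank of $D$, so $\mathrm{rank}(\mathcal{X}^q_t) = \mathrm{rank}(TD) = \mathrm{rank}(D) = n + rq$, which is the claimed value.

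Finally, for the minimality statement I would track how the rank evolves with $q$. For $q$ small enough that $mq < n$, the observability matrix generically has full row rank $mq$, so the same factorization gives $\mathrm{rank}(\mathcal{X}^q_t) = mq + rq$, i.e., $\mathcal{X}^q_t$ is full row rank as asserted before Eq.~\eqref{eq.data_matrix}. Once $q$ crosses the threshold $mq \geq n$, the rank saturates at $n + rq$, so each unit increment of $q$ then adds only $r$ rather than $m + r$ and the matrix becomes row-rank deficient by $mq - n$. The minimal admissible order is therefore the smallest $q^*$ at which this saturation first occurs, namely $mq^* \geq n$, and the state order is recovered as $n = \mathrm{rank}(\mathcal{X}^{q}_t) - rq$. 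The main obstacle I anticipate is justifying the full row rank of $D$: this is a persistency-of-excitation condition on the experiments, implicit in the setup, which must be stated as a hypothesis, since without sufficiently rich inputs and initial conditions the product rank could fall below $n + rq$ and corrupt the order estimate. A secondary point is that the $mq < n$ case relies on $O^q_{t-1}$ having full row rank, a generic condition not directly covered by Assumption~\ref{assump.observability}, which I would flag explicitly.
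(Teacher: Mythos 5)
Your proof is correct and follows essentially the same route as the paper, which gives no formal proof of this proposition but justifies it through exactly the rank structure you formalize via the factorization of $\mathcal{X}^q_t$ induced by Eq.~\eqref{eq.output_model}. The two caveats you flag --- that the rank value $n+rq$ requires the initial states and input histories across rollouts to be jointly exciting (full row rank of your factor $D$), and that the full-row-rank claim for $mq<n$ is not literally implied by Assumption~\ref{assump.observability} --- are genuine implicit hypotheses in the paper's informal argument, and making them explicit is the right call.
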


\subsection{Solving for the ARMA parameters:} \label{sec.sol_arma}
Consider the linear problem:
\begin{equation}
    Z^{(1:N)}_t = \begin{bmatrix}
    \bar{\alpha}_t &  \bar{\beta}_t
    \end{bmatrix}\mathcal{X}^q_t 
    .\label{eq.ls_eq_abbrev}
\end{equation}
where $Z^{(1:N)}_t \in \mathbb{R}^{m \times N}$ indicate the outputs from $N$ rollouts, and $\mathcal{X}^q_t \in \mathbb{R}^{mq+rq \times N}$ is the data matrix containing the past outputs and inputs as shown in Eq.~\eqref{eq.data_matrix}. The ARMA parameters can be solved by minimizing the least-squares error.  \\
For $q$ s.t. $mq \geq n$, Eq.~\eqref{eq.ls_eq_abbrev} does not have a unique solution, owing to the row rank-deficiency of $\mathcal{X}^q_t$. The least-squares solution is the Moore-Penrose inverse of Eq.~\eqref{eq.ls_eq_abbrev}, assuming $N > (m+r)q$. Let $\mathcal{X}^q_t = U 
\Sigma  \tr{V}$ where $\Sigma = \begin{bmatrix}
\Sigma_{n + rq} & \\
& \mathbf{0} \\
\end{bmatrix} $, $\Sigma_{n + rq} = \begin{bmatrix}
\sigma_1 &  &  0\\
& \ddots & \\
0 &  &\sigma_{n + qr}
\end{bmatrix}$, and $V = [V_1\ V_2]$, $U = [U_1\ U_2]$, where $(U_1,\ V_1)$ are the singular vectors corresponding to the non-zero singular values $\Sigma_{n + rq}$ and $(U_2,\ V_2)$ are the singular vectors corresponding to the zero-singular values, i.e., the singular vectors spanning the nullspace of $\mathcal{X}^q_t$. The least-squares (LS) solution is: 
\begin{equation}
    \begin{bmatrix}
    \bar{\alpha}^*_t  & \bar{\beta}^*_t
    \end{bmatrix} = \sum_{i=1}^{n + rq} \sigma^{-1}_i (Z^{(1:N)}_t \mathrm{v}_i)\tr{\mathrm{u}_i} , \label{eq.ls_sol}
\end{equation}
where $\mathrm{u}_i$ and $\mathrm{v}_i$ are the singular vectors corresponding to $\sigma_i$.\\
\indent A natural question that arises is: \textit{What is the relationship between the ARMA model of Eq.~\eqref{eq.true_ls_sol} and the least-squares problem given by Eq.~\eqref{eq.ls_eq_abbrev}}? First, we show that the ARMA solution (Eq.~\eqref{eq.true_ls_sol}) satisfies the LS-problem Eq.~\eqref{eq.ls_eq_abbrev}.
\begin{proposition}
The ARMA solution given by Eq.~\eqref{eq.true_ls_sol} is a solution of the LS-equation~\eqref{eq.ls_eq_abbrev}.
\end{proposition}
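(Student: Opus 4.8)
The plan is to observe that Proposition~\ref{prop.1} already establishes the ARMA identity for a single trajectory, and that the matrix equation Eq.~\eqref{eq.ls_eq_abbrev} is nothing more than the column-wise concatenation of this identity across the $N$ rollouts; the true parameters of Eq.~\eqref{eq.true_ls_sol} therefore satisfy it with zero residual and are \emph{a fortiori} a least-squares solution. The one fact that makes the concatenation legitimate is that $\bar{\alpha}_t,\bar{\beta}_t$ are built solely from the system matrices and $(O_{t-1}^q)^{\dagger}$, so they are identical across all rollouts despite the rollouts having different (unknown) initial conditions.

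Concretely, I would first fix a rollout $i$ and specialize Eq.~\eqref{eq.output_model} to it as $Z^{q,(i)}_{t-1} = O_{t-1}^q x_{t-q}^{(i)} + G_{t-1}^q U^{q,(i)}_{t-1}$. The step on which the whole argument rests is the exactness of the state reconstruction: because Assumption~\ref{assump.observability} makes $O_{t-1}^q$ full column rank $n$, its Moore--Penrose inverse is a genuine left inverse, $(O_{t-1}^q)^{\dagger}O_{t-1}^q = I_n$, so that $x_{t-q}^{(i)} = (O_{t-1}^q)^{\dagger}(Z^{q,(i)}_{t-1} - G_{t-1}^q U^{q,(i)}_{t-1})$ recovers the initial state \emph{exactly}, not merely in a least-squares sense. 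I expect this to be the most delicate point, since it is what guarantees the residual is identically zero rather than only small.

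Substituting this reconstruction into the output expansion and collecting terms against the coefficient definitions in Eq.~\eqref{eq.true_ls_sol} --- which is exactly the computation already carried out in arriving at Eq.~\eqref{eq.true_arma} --- yields $z_t^{(i)} = \bar{\alpha}_t Z^{q,(i)}_{t-1} + \bar{\beta}_t U^{q,(i)}_{t-1}$ for each $i$. Since the $i$-th columns of $Z^{(1:N)}_t$ and of $\mathcal{X}^q_t$ are, respectively, $z_t^{(i)}$ and the stack $\tr{[\tr{(Z^{q,(i)}_{t-1})},\ \tr{(U^{q,(i)}_{t-1})}]}$, concatenating these $N$ identities reconstitutes precisely $Z^{(1:N)}_t = [\bar{\alpha}_t\ \bar{\beta}_t]\mathcal{X}^q_t$. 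Every column being matched exactly, the true parameters solve Eq.~\eqref{eq.ls_eq_abbrev} identically, which completes the argument.
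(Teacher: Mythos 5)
Your proof is correct and takes essentially the same route as the paper, which simply notes that by the construction in Proposition~\ref{prop.1} the ARMA identity holds for every rollout $z_t^{(i)}$, so the concatenated equation~\eqref{eq.ls_eq_abbrev} is satisfied exactly. Your expansion --- the left-inverse property $(O_{t-1}^q)^{\dagger}O_{t-1}^q = I_n$ under Assumption~\ref{assump.observability} and the fact that $\bar{\alpha}_t,\bar{\beta}_t$ do not depend on the rollout's initial condition --- just makes explicit what the paper leaves implicit.
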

\begin{proof}
By construction, $z_t = \alpha^{\#}_{t,t-1} z_{t-1} + \cdots  + \beta^{\#}_{t,t-1} u_{t-1} + \cdots + \beta^{\#}_{t,t-q} u_{t-q},$ where $(\alpha^{\#}_{t,t-k},  \beta^{\#}_{t,t-k})$ are the solution~\eqref{eq.true_ls_sol}, and it holds for all ${z^{(i)}_t}$
\end{proof}
Thus, given the system~\eqref{eq.system}, after time $t$, s.t. $mt \geq n$, the output $z_t$ can be predicted by both the LS-solution~\eqref{eq.ls_sol} as well as the ARMA-solution~\eqref{eq.true_ls_sol}. This is summarized in the following corollary.
\begin{corollary} \label{corollary_ls}
Let the solution of Eq.~\eqref{eq.true_ls_sol} be denoted by $(\bar{\alpha}^{\#}_t, \bar{\beta}^{\#}_t)$. Then for any time $t$ s.t. $mt \geq n$, 
\begin{align}
    z_t &= \alpha^{\#}_{t,t-1} z_{t-1} + \cdots + \alpha^{\#}_{t,t-q} z_{t-q} \nonumber\\ 
    &  + \beta^{\#}_{t,t-1} u_{t-1} + \cdots + \beta^{\#}_{t,t-q} u_{t-q} \\ 
    & = \alpha^{*}_{t,t-1} z_{t-1} + \cdots + \alpha^{*}_{t,t-q} z_{t-q} \nonumber\\ 
        &  + \beta^{*}_{t,t-1} u_{t-1} + \cdots + \beta^{*}_{t,t-q} u_{t-q}. \label{eq.ls_sol_arma} 
\end{align}
An immediate consequence of the above result is that the LS-ARMA model~\eqref{eq.ls_sol} also predicts the response of the linear system~\eqref{eq.system}.
\end{corollary}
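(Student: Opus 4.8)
The plan is to separate the claim into its two equalities. The first equality, $z_t = \alpha^{\#}_{t,t-1}z_{t-1}+\cdots+\beta^{\#}_{t,t-q}u_{t-q}$, is nothing but the statement of Proposition~\ref{prop.1} together with Corollary~\ref{corollary.1}: the $\#$-parameters are by construction the true ARMA coefficients of Eq.~\eqref{eq.true_ls_sol}, and these reproduce $z_t$ exactly once $O^q_{t-1}$ has attained rank $n$, i.e. for $mt\geq n$. The entire content of the corollary therefore lies in the second equality, namely that the minimum-norm least-squares parameters $(\bar{\alpha}^*_t,\bar{\beta}^*_t)$ of Eq.~\eqref{eq.ls_sol} produce the \emph{same} prediction as the $\#$-parameters, even though the two parameter sets need not be equal (the LS-equation~\eqref{eq.ls_eq_abbrev} has a non-unique solution because $\mathcal{X}^q_t$ is row rank-deficient).

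First I would record that both parameter sets solve Eq.~\eqref{eq.ls_eq_abbrev} \emph{exactly}. For the $\#$-solution this is the preceding proposition; for the $*$-solution it follows because the system is consistent (a solution exists), so its minimum-norm least-squares solution is itself an exact solution, i.e. $[\bar{\alpha}^*_t\ \bar{\beta}^*_t]\mathcal{X}^q_t = Z^{(1:N)}_t = [\bar{\alpha}^{\#}_t\ \bar{\beta}^{\#}_t]\mathcal{X}^q_t$. Introducing the difference $\Delta_t := [\bar{\alpha}^{\#}_t\ \bar{\beta}^{\#}_t]-[\bar{\alpha}^*_t\ \bar{\beta}^*_t]$, this gives $\Delta_t\,\mathcal{X}^q_t = 0$; equivalently, the rows of $\Delta_t$ lie in the left null-space of $\mathcal{X}^q_t$, so $\Delta_t w = 0$ for every $w$ in the column space of $\mathcal{X}^q_t$.

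The crux is then to show that the information-state $\mathcal{Z}_t = \tr{[\tr{z_{t-1}}\,\cdots\,\tr{z_{t-q}}\,\tr{u_{t-1}}\,\cdots\,\tr{u_{t-q}}]}$ of the actual trajectory at time $t$ lies in the column space of $\mathcal{X}^q_t$. By Eq.~\eqref{eq.output_model} every attainable information-state is a fixed linear function of the pair $(x_{t-q},U^q_{t-1})\in\mathbb{R}^{n+rq}$, so the set of attainable information-states is a subspace of dimension at most $n+rq$ --- which is precisely the rank of $\mathcal{X}^q_t$ reported after Eq.~\eqref{eq.data_matrix}. Hence, provided the $N>(m+r)q$ rollouts are rich enough to realize this rank (i.e. their initial states and input sequences span the full $(n+rq)$-dimensional reachable set), the columns of $\mathcal{X}^q_t$ span that entire subspace and in particular contain $\mathcal{Z}_t$. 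Applying $\Delta_t$ to $\mathcal{Z}_t$ then yields $[\bar{\alpha}^{\#}_t\ \bar{\beta}^{\#}_t]\mathcal{Z}_t = [\bar{\alpha}^*_t\ \bar{\beta}^*_t]\mathcal{Z}_t$, which is exactly the second equality; combined with the first, both predictions equal $z_t$.

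The step I expect to be the main obstacle is the last one --- certifying that $\mathcal{Z}_t$ genuinely lies in the column space of $\mathcal{X}^q_t$. This is really a persistency-of-excitation / richness requirement on the experiments: the $N$ rollouts must excite all $n+rq$ degrees of freedom of the information-state, so that no attainable $\mathcal{Z}_t$ escapes their span. If that richness fails, $\Delta_t$ could act nontrivially on $\mathcal{Z}_t$ and the two predictions could diverge; I would therefore make this richness explicit (it is implicit in the full-rank claim that $\operatorname{rank}\mathcal{X}^q_t = n+rq$) as the hypothesis underpinning the equality.
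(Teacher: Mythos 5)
Your proof is correct, and it is more complete than what the paper provides: the paper states this corollary with no proof at all, treating it as a summary of the preceding proposition (the $\#$-solution solves the consistent system \eqref{eq.ls_eq_abbrev}, the minimum-norm LS solution \eqref{eq.ls_sol} solves it too, hence both ``predict'' $z_t$). That implicit argument only establishes agreement of the two predictors on the $N$ rollouts stored in $\mathcal{X}^q_t$; it does not by itself cover an arbitrary trajectory of the system. Your additional step --- writing $\Delta_t \mathcal{X}^q_t = 0$ and then showing that every attainable regressor $\tr{[\tr{(Z^q_{t-1})}\ \tr{(U^q_{t-1})}]}$ lies in the $(n+rq)$-dimensional image of the linear map $(x_{t-q},U^q_{t-1}) \mapsto \bigl[\begin{smallmatrix} O^q_{t-1} & G^q_{t-1} \\ 0 & I \end{smallmatrix}\bigr](x_{t-q},U^q_{t-1})$ from \eqref{eq.output_model}, so that $\Delta_t$ annihilates it whenever $\operatorname{rank}\mathcal{X}^q_t = n+rq$ --- is exactly the missing bridge. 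You are also right to flag that the rank condition on $\mathcal{X}^q_t$ is a genuine persistency-of-excitation hypothesis: the paper asserts ``the rank of $\mathcal{X}^q_t$ is going to be $n+rq$'' after \eqref{eq.data_matrix} without conditioning it on the richness of the experiments, and the corollary's second equality can fail for trajectories outside the span of the data if that rank is not attained. In short: same intended route as the paper, but you have supplied the argument the paper omits and made its hidden hypothesis explicit.
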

\begin{figure}[t]
    \centering
    \includegraphics[width=0.8\linewidth]{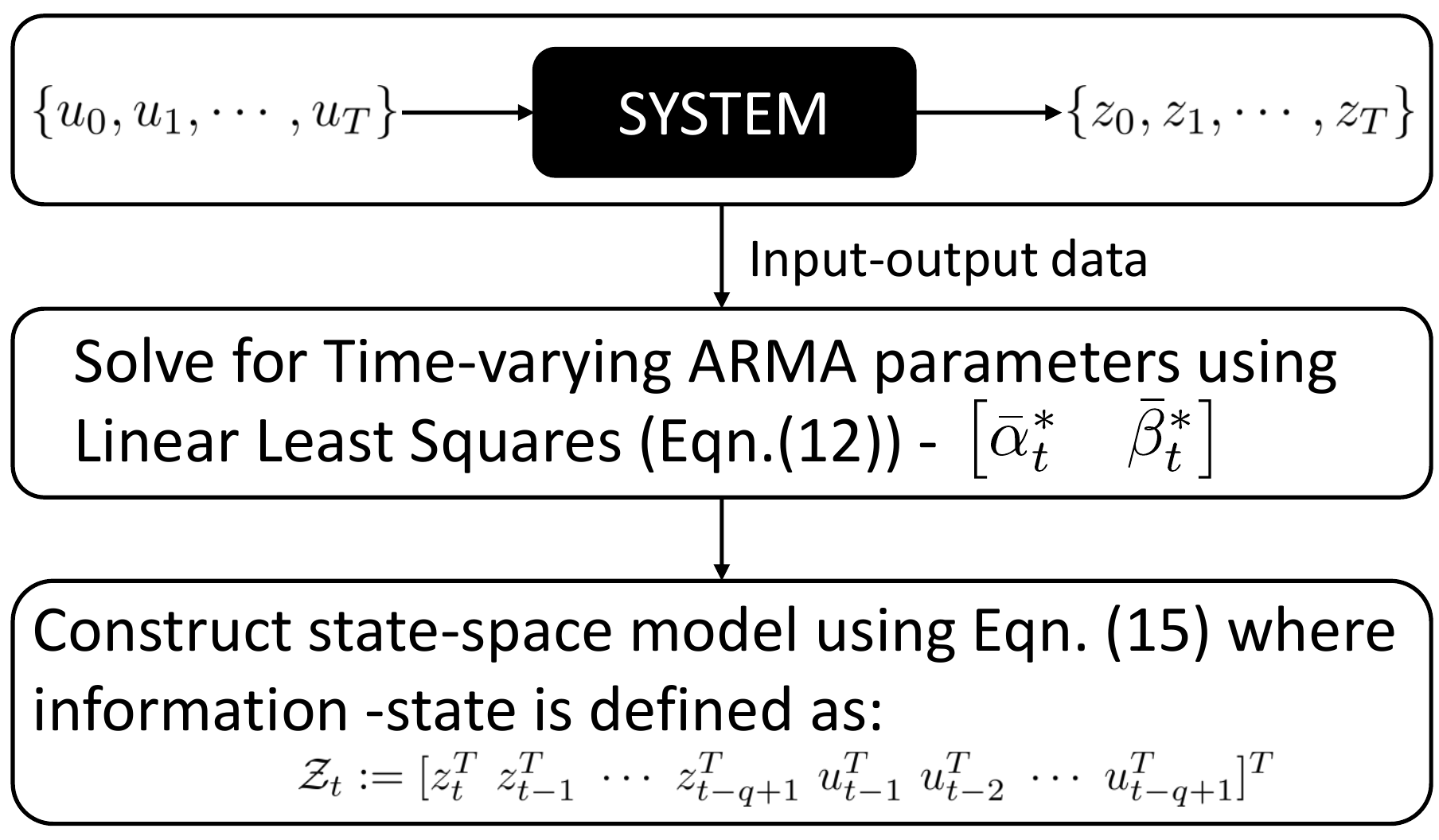}
    \caption{Steps involved in system identification using Information-state approach.}
    \label{fig:intro_image}
\end{figure}

We note here that the LS-solution~\eqref{eq.ls_sol} and the ``fundamental solution"~\eqref{eq.true_ls_sol} need not be the same; nonetheless, both can predict the initial + forced response of the linear system~\eqref{eq.system} after time $t$ s.t. $mt \geq n$, under the observability assumption~\ref{assump.observability}.\\
The information-state based state-space model can then be written as in Eq.~\eqref{eq.info_state_model}(next page).
\begin{table*}[h]
\begin{subequations}\label{eq.info_state_model}
\begin{align}
    \setcounter{MaxMatrixCols}{20}
    \underbrace{
    \begin{bmatrix} 
    z_{t} \\ z_{t-1} \\ z_{t-2} \\ \vdots \\ z_{t-q+1} \\ \hline u_{t-1} \\u_{t-2} \\ u_{t-3} \\ \vdots \\ u_{t-q+1} 
    \end{bmatrix}}_{\mathcal{Z}_t} &= 
    \underbrace{\begin{bmatrix} 
    \alpha_{t,t-1} & \alpha_{t,t-2} & \cdots  &  \alpha_{t,t-q+1}  & \alpha_{t,t-q} & \vline &  \beta_{t,t-2} & \beta_{t,t-3} & \cdots  &  \beta_{t,t-q+1}  
    & \beta_{t,t-q}    \\  
    I_{m \times m} & 0 & \cdots & 0 & 0 & \vline & 0 & 0 & \cdots & 0 
    & 0    \\  
    0 & I & \cdots & 0 & 0 & \vline & 0 & 0 & \cdots & 0 
    & 0    \\  
    \vdots & & \ddots &  & \vdots & \vline & \vdots &  & \ddots & \vdots
    & 0    \\
    0 & 0 & \cdots & I & 0 & \vline & 0 & 0 & \cdots & 0 
    & 0    \\
    \hline
    0 & 0 & \cdots & 0 & 0 & \vline & 0 & 0 & \cdots & 0 
    & 0    \\
    0 & 0 & \cdots & 0 & 0 & \vline & I_{r\times r} & 0 & \cdots & 0 
    & 0     \\  
     0 & 0 & \cdots & 0 & 0 & \vline & 0 & I & \cdots & 0 
    & 0     \\  
    \vdots & & \ddots &  & \vdots & \vline & \vdots &  & \ddots & & \vdots    \\
    0 & 0 & \cdots & 0 & 0 & \vline & 0 & 0 & \cdots & I
    & 0
    \end{bmatrix} }_{\mathcal{A}_{t-1}}
    \underbrace{\begin{bmatrix} z_{t-1} \\ z_{t-2} \\ \vdots \\ z_{t-q+1}  \\ z_{t-q} \\ \hline  u_{t-2} \\ u_{t-3} \\   \vdots \\u_{t-q+1} \\ u_{t-q} \end{bmatrix}}_{\mathcal{Z}_{t-1}} 
    + \underbrace{\begin{bmatrix} \beta_{t,t-1} \\ 0 \\ \vdots \\ 0 \\ 0 \\ \hline I \\ 0  \\ \vdots \\  0 \\0 \end{bmatrix}}_{\mathcal{B}_{t-1}}  u_{t-1} \\
    z_t &= \underbrace{\begin{bmatrix}
    I_{m\times m} & 0 &\cdots & 0
    \end{bmatrix}}_{\mathcal{C}_t} \mathcal{Z}_{t}  \label{eq.info_state_output}
\end{align} 
\end{subequations}
\end{table*}
It can now be seen that the information-state $\mathcal{Z}_{t}$ is indeed a state of the system~\eqref{eq.system}. In particular, the following result holds:
\begin{proposition} \label{prop.model_equivalence}
Under assumption~\ref{assump.observability}, and for any time $t$, s.t. $mt \geq n$, the output $z'_t$ of the information-state model in Eq.~\eqref{eq.info_state_model} is equal to the output of the linear state-space model~\eqref{eq.system}, given the same inputs, and regardless of the ``unknown" initial condition $x_0$.
\end{proposition}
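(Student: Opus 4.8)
The plan is to prove the claim by induction on $t$, exploiting the block structure of the system matrices $\mathcal{A}_{t-1}$, $\mathcal{B}_{t-1}$, $\mathcal{C}_t$ in Eq.~\eqref{eq.info_state_model}. The key observation is that the dynamics of $\mathcal{Z}_t$ split into two qualitatively different groups of rows. The top block row of $\mathcal{A}_{t-1}$, together with the leading entry $\beta_{t,t-1}$ of $\mathcal{B}_{t-1}$, implements exactly the TV-ARMA recursion of Eq.~\eqref{eq.TV_ARMA}: reading off the first component of $\mathcal{Z}_t$ gives $z'_t = \alpha_{t,t-1} z_{t-1} + \cdots + \alpha_{t,t-q} z_{t-q} + \beta_{t,t-1} u_{t-1} + \cdots + \beta_{t,t-q} u_{t-q}$, where the $z_{t-k}$ and $u_{t-k}$ are the corresponding entries of $\mathcal{Z}_{t-1}$. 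All remaining rows are pure shift (delay-line) operations: they copy $z_{t-1},\dots,z_{t-q+1}$ and $u_{t-1},\dots,u_{t-q+1}$ from $\mathcal{Z}_{t-1}$ (and $u_{t-1}$ via $\mathcal{B}_{t-1}$) into the appropriate slots of $\mathcal{Z}_t$, which holds as an identity directly from the definition of the information state.

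First I would fix the base case by seeding $\mathcal{Z}_{t-1}$ with the true measured outputs and inputs of the trajectory generated by Eq.~\eqref{eq.system} from the unknown $x_0$. The crucial point underlying the phrase ``regardless of the unknown initial condition'' is that the ARMA parameters in Eq.~\eqref{eq.true_ls_sol} (equivalently the LS parameters of Eq.~\eqref{eq.ls_sol}) depend only on the system matrices $A_\tau, B_\tau, C_\tau$ and not on $x_0$; hence the exact relation of Corollary~\ref{corollary_ls} holds along the actual trajectory whatever $x_0$ was, provided $mt \geq n$ so that $O^q_{t-1}$ has rank $n$.

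The inductive step then follows immediately. Assuming $\mathcal{Z}_{t-1}$ contains the true outputs and inputs of the system~\eqref{eq.system}, the shift rows guarantee that the carried-over components of $\mathcal{Z}_t$ are again the true $z_{t-1},\dots,z_{t-q+1}$ and $u_{t-1},\dots,u_{t-q+1}$, while the ARMA row produces, by Corollary~\ref{corollary_ls}, exactly the true output $z_t$. Therefore $\mathcal{Z}_t$ is populated entirely with the true signals, and the output equation~\eqref{eq.info_state_output} merely reads off the top block, giving $z'_t = \mathcal{C}_t \mathcal{Z}_t = z_t$. This closes the induction and establishes equality of the two outputs for every $t$ with $mt \geq n$.

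The main obstacle is not the algebra, which is bookkeeping, but rather pinning down the timing threshold and the seeding carefully: the exact ARMA prediction of Corollary~\ref{corollary_ls} is only valid once $O^q_{t-1}$ attains rank $n$, i.e.\ for $mt \geq n$, so one must argue that the information state is correctly initialized with genuine trajectory data up to that instant, and that from then on the recursion is self-sustaining. The independence from $x_0$, which is what makes $\mathcal{Z}_t$ a legitimate state (a sufficient statistic) for the original system, rests entirely on the fact---established in Proposition~\ref{prop.1} via observability---that $x_{t-q}$ is recoverable from the information state through $(O^q_{t-1})^{\dagger}$, so no separate knowledge of $x_0$ is ever needed.
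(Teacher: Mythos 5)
Your proposal is correct and follows essentially the same route as the paper: both rest on Corollary~\ref{corollary_ls} (the ARMA relation holds exactly along the true trajectory once $mt \geq n$, independently of $x_0$) together with the observation that the information-state model's top block row implements that ARMA recursion while the remaining rows are delay-line shifts. The only difference is one of rigor: the paper asserts that the result ``immediately follows'' from $\mathcal{Z}_t$ being a state by definition, whereas you make explicit the induction showing that the propagated information state remains populated with the true signals --- a worthwhile elaboration, since without it one could worry that the model's internally generated outputs drift from the true ones, but not a different proof.
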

\begin{proof}
By corollary~\ref{corollary_ls}, we know that the output of the linear system~\eqref{eq.system} after time $t$ s.t. $mt\geq n$ may be written as: $z_t = \alpha^{*}_{t,t-1} z_{t-1} + \cdots + \alpha^{*}_{t,t-q} z_{t-q} + \beta^{*}_{t,t-1} u_{t-1} + \cdots + \beta^{*}_{t,t-q} u_{t-q}. $ Further $\mathcal{Z}_t$ is a state of the system owing to its definition, and thus, noting the definition of the output $z'_t$ of the information-state system~\eqref{eq.info_state_output}, the result immediately follows.
\end{proof}

\subsection*{Linear Time Invariant (LTI) Case:} A further simplification of the information-state model is possible if the linear system~\eqref{eq.system} is LTI. Since the system is LTI, Eq.~\eqref{eq.TV_ARMA} becomes $z_t = \alpha_1 z_{t-1} + \alpha_2 z_{t-2} + \cdots + \alpha_q z_{t-q} + \beta_1 u_{t-1} + \cdots + \beta_q u_{t-q}$ regardless of time $t$. Let us define:
\begin{align}
    X^{(1)}_t := z_t = \alpha_1 X^{(1)}_{t-1} + X^{(2)}_{t-1} + \beta_1 u_{t-1} ,\label{eq.observer_x1}
\end{align}
where,
\begin{align}
    X^{(2)}_{t-1} &:= \alpha_2 z_{t-2} + \cdots + \alpha_q z_{t-q} + \beta_2 u_{t-2} + \cdots + \beta_q u_{t-q}, \nonumber\\
    X^{(2)}_{t} &:= \alpha_2 z_{t-1} + \cdots + \alpha_q z_{t-q + 1}  \nonumber\\
    &+\beta_2 u_{t-1} + \cdots + \beta_q u_{t-q +1}, \label{eq.observer_x2}.
\end{align}
Let us also define: 
\begin{align}
    X^{(3)}_{t-1} &:= \alpha_3 z_{t-2} + \cdots + \alpha_q z_{t-q+1} + \nonumber\\
    &\beta_3 u_{t-2} + \cdots + \beta_q u_{t-q+1}.\nonumber
\end{align}
Now, we can write $X^{(2)}_{t}$ as:  
\begin{align}
     X^{(2)}_{t} = \alpha_2 X^{(1)}_{t-1} + X^{(3)}_{t-1} + \beta_2 u_{t-1}. \label{eq.observer_x2_ss}
\end{align}
Continuing in this way till $X^{(q)}_{t}$, we obtain:
\begin{align}
     X^{(q)}_{t} = \alpha_q X^{(1)}_{t-1} + \beta_q u_{t-1} \label{eq.observer_xq}.
\end{align}
Putting Eq.~\eqref{eq.observer_x1} -~\eqref{eq.observer_xq} together, we obtain the information-state dynamics in the ``observer canonical form":\\
\begin{align}
   \underbrace{\begin{bmatrix}
    X^{(1)}_{t} \\ \vdots \\ X^{(q)}_{t}
    \end{bmatrix}}_{\bar{X}_t} &=
    \begin{bmatrix}
    \alpha_1 & I & 0 & \cdots & 0\\
    \alpha_2 & 0 & I & \cdots & 0\\
    \vdots & &\ddots & & \vdots\\
    \alpha_q & 0 & 0 & \cdots & 0\\
    \end{bmatrix}
    \begin{bmatrix}
    X^{(1)}_{t-1} \\ \vdots \\ X^{(q)}_{t-1}
    \end{bmatrix}+ 
    \begin{bmatrix}
    \beta_1 \\ \beta_2 \\ \vdots \\ \beta_q
    \end{bmatrix} u_{t-1}, \\
    z_t &= 
    \begin{bmatrix}
    I & 0 & \cdots & 0
    \end{bmatrix}
    \bar{X}_t .
\end{align}

 We note that the above formulation to create the information state and the linear state space model in observer canonical form is restricted to the LTI case, as the simplification in equations~(\ref{eq.observer_x1}-\ref{eq.observer_xq}) are feasible only in the LTI case: since, in the LTV case, the coefficients $\alpha, \beta$ are also time dependent, we cannot obtain Eq.~\eqref{eq.observer_x2_ss} from Eq.~\eqref{eq.observer_x2}. Therefore, the information-state in the LTV case needs to include the past inputs $\{u_{t-1}, \cdots, u_{t-q}\}$ as part of the information-state.

\section{Implication on feedback control using the information-state} \label{sec.control_implications}
One of the primary reasons for identifying a system is for the purposes of control. In this section, we theoretically show that designing an optimal feedback control using the information-state model leads to no loss in optimality when compared to the full-state system after an initial transient. First, we show that the true state and the information-state are related by a linear transformation in the following lemma.

\begin{lemma}\label{lemma.lineartransformation}
The true state $x_t$ is a linear transformation of the information-state $\mathcal{Z}_t$, i.e. $x_t = T_t \mathcal{Z}_t$. 
\end{lemma}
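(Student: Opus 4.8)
The plan is to recover the true state by first backing out the state at the tail of the information-state window and then propagating it forward through the dynamics. The information-state $\mathcal{Z}_t$ collects the $q$ outputs $z_t,\dots,z_{t-q+1}$ and the $q-1$ inputs $u_{t-1},\dots,u_{t-q+1}$; all of these are expressible in terms of the single state $x_{t-q+1}$ together with that same input history. Concretely, exactly as in Eq.~\eqref{eq.output_model} but with the window shifted forward by one step, I would write
\begin{equation*}
  \underbrace{\begin{bmatrix} z_t \\ z_{t-1} \\ \vdots \\ z_{t-q+1}\end{bmatrix}}_{Z^q_t} = O^q_t\, x_{t-q+1} + G^q_t\, \underbrace{\begin{bmatrix} u_{t-1} \\ u_{t-2} \\ \vdots \\ u_{t-q+1}\end{bmatrix}}_{U^{q-1}_t},
\end{equation*}
where $O^q_t = [\,(C_tA_{t-1}\cdots A_{t-q+1})^{\mathsf T},\dots,C_{t-q+1}^{\mathsf T}\,]^{\mathsf T}$ is the observability matrix of Eq.~\eqref{eq.obsv} shifted forward by one step and $G^q_t$ is the corresponding block-Toeplitz matrix of Markov parameters. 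The key bookkeeping check here, which I expect to be the main thing to get right, is that the inputs entering this expansion are precisely $u_{t-1},\dots,u_{t-q+1}$, i.e.\ exactly the input block stored in $\mathcal{Z}_t$, so that no input outside the information-state is required.

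Next I would invoke Assumption~\ref{assump.observability}: since $mq\ge n$, the matrix $O^q_t$ has full column rank $n$, so its Moore--Penrose inverse is a genuine left inverse, $(O^q_t)^{\dagger}O^q_t = I_n$. This lets me solve the previous display for the tail state,
\begin{equation*}
  x_{t-q+1} = (O^q_t)^{\dagger}\bigl(Z^q_t - G^q_t\,U^{q-1}_t\bigr),
\end{equation*}
exhibiting $x_{t-q+1}$ as an explicit linear function of the two blocks of $\mathcal{Z}_t$, independent of the unknown initial condition $x_0$.

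Finally I would propagate this tail state forward to time $t$ using the dynamics~\eqref{eq.system}. Writing $\Phi_t := A_{t-1}\cdots A_{t-q+1}$ for the state-transition matrix from $t-q+1$ to $t$ and $\Gamma_t$ for the associated forced-response matrix, the solution of~\eqref{eq.system} gives $x_t = \Phi_t\,x_{t-q+1} + \Gamma_t\,U^{q-1}_t$, where again only the inputs $u_{t-1},\dots,u_{t-q+1}$ appear. Substituting the expression for $x_{t-q+1}$ yields
\begin{equation*}
  x_t = \Phi_t (O^q_t)^{\dagger} Z^q_t + \bigl(\Gamma_t - \Phi_t (O^q_t)^{\dagger} G^q_t\bigr) U^{q-1}_t,
\end{equation*}
which is of the form $x_t = T_t\,\mathcal{Z}_t$ with $T_t = [\,\Phi_t (O^q_t)^{\dagger}\ \ \Gamma_t - \Phi_t (O^q_t)^{\dagger} G^q_t\,]$, partitioned conformably with the output and input blocks of $\mathcal{Z}_t$. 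This completes the construction. Beyond the index bookkeeping noted above, the only substantive hypothesis used is the full-column-rank property of $O^q_t$ supplied by Assumption~\ref{assump.observability}, which is exactly what makes the left inverse, and hence the map $T_t$, well defined.
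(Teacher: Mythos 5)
Your proof is correct and follows essentially the same route as the paper's: invert the stacked observability relation to recover a past state from the stored outputs and inputs, then propagate it forward through the dynamics to time $t$. The only difference is bookkeeping — the paper anchors the argument at $x_{t-q}$ and $\mathcal{Z}_{t-1}$ while you anchor at $x_{t-q+1}$ and $\mathcal{Z}_t$, which if anything matches the stated definition of the information-state (with its $q-1$ inputs) more cleanly.
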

\begin{proof}
    From~\eqref{eq.output_model}, we know $Z_{t-1} = O_{t-1}x_{t-q} + G_{t-1}U_{t-1}$ (we leave out the superscript $q$ for readability). Then,
    \begin{align*}
        x_{t-q} &= (O_{t-1})^{\dagger} (Z_{t-1} - G_{t-1} U_{t-1}) \\
        x_{t-q} &= (O_{t-1})^{\dagger} [I_{mq\times mq} - G_{t-1}]
        \underbrace{
        \begin{bmatrix}
                Z_{t-1} \\
                U_{t-1}
        \end{bmatrix}}_{\mathcal{Z}_{t-1}} \\
        x_{t-q} &= \tilde{T}_{t-1}\mathcal{Z}_{t-1}.
    \end{align*}
    The state at time $t-1$ can be written as 
    \begin{align*}
        x_{t-1} &= A_{t-2} \cdots A_{t-q} x_{t-q} + \nonumber \\
        &[B_{t-2},\ A_{t-2}B_{t-3}, \cdots ,A_{t-2}\cdots B_{t-q}] U_{t-1}, \\
        x_{t-1} &=  A_{t-2} \cdots A_{t-q} \tilde{T}_{t-1}\mathcal{Z}_{t-1} + \nonumber\\
        & [0_{n\times mq},\ B_{t-2},\ \cdots ,\ A_{t-2}\cdots B_{t-q}] \mathcal{Z}_{t-1}, \nonumber\\
        x_{t-1} &= T_{t-1} \mathcal{Z}_{t-1} .
    \end{align*}
\end{proof}
To show the optimality of a feedback law using the information-state, we consider the following finite horizon optimal control problem,
\begin{subequations}\label{eq.OCP}
\begin{align}
    J^* &= \min_{u_0, \cdots, u_{N-1}} \sum_{t=0}^{H-1} c_t(z_t, u_t) + c_N(z_N), \\ 
    \text{subject to:} ~~ x_t &= A_{t-1}x_{t-1} + B_{t-1} u_{t-1}, \nonumber \\
    z_t &= C_t x_t,\label{eq.OCP_dynamics} \\
    x_0 & \text{ is unknown,} \nonumber 
\end{align}
\end{subequations}
where, $c_t(\cdot, \cdot)$ and $c_N(\cdot)$ are given convex cost functions and and $H$ is the horizon. The solution to the above optimal control problem cannot be calculated since the initial state is unknown. \\
Let $\bar{q}$ be the minimal $q$ that satisfies $mq\geq n$.
Define 
\begin{align*}
    J_1 &= \sum_{t=0}^{\bar{q}-2} c(z_t,u_t), \\
    J_x &= \sum_{t=\bar{q}-1}^{H-1} c(z_t,u_t) + c_N(z_N). 
\end{align*}
Eq.~\eqref{eq.OCP} can be decoupled and written as 
\begin{align} \label{eq.decoupled OCP}
    J^* \leq \min_{u_0, \cdots, u_{\bar{q}-2}} J_1 + \min_{u_{\bar{q}-1}, \cdots, u_{N-1}} J_x 
\end{align}
subject to the constraints in Eq.~\eqref{eq.OCP_dynamics}. \\
 We assume that the first $\bar{q} - 1$ control inputs and resulting observations are specified to us which implies that the the state $x_{\bar{q}-1}$ is known given the observability assumption. Then, we reformulate the original partially-observed control problem as the following fully observed problem starting at the initial state $x_{\bar{q}-1}$:
 \begin{subequations}\label{eq:OptProbX}
     \begin{align}
    J_x^* &= \min_{u_{\bar{q}-1}, \cdots, u_{N-1}} \sum_{t=\bar{q}-1}^{H-1} c(z_t,u_t) + c_N(z_N). \label{eq.cost_J_x}\\
    \text{s.t.}~ x_t &= A_{t-1}x_{t-1} + B_{t-1} u_{t-1}, ~\text{given}~ x_{\bar{q}-1}\nonumber \\
    z_t &= C_t x_t,\label{eq:OptProbX_dynamics}
\end{align}
 \end{subequations}

The above problem is well-posed and since the full state is known (due to lack of uncertainty), there exists an optimal sequence of control inputs $u_t^* = \pi_x^t(x_t)$, where $\pi_x^t(\cdot)$ is the underlying optimal feedback policy at time $t$. \\
\begin{remark}\label{remark.III.1}
    Note that typically for a partially observed problem, one would use an observer to estimate the state, which would then be used to specify the control action assuming the estimated state to be the true state (certainty equivalence). Owing to the observability assumption \ref{assump.observability}, given the first $\bar{q}$ inputs and outputs, one can, in principle, exactly reconstruct the initial state, and thus predict the state evolution exactly after $\bar{q}$ steps, thereby mapping back to the fully observed problem. So, we shall assume that the first $\bar{q}$ inputs are specified in the partially observed control formulation we discuss later in this section.
\end{remark}
We now pose the following ``fully observed" optimal control problem in terms of the Information-State assuming that the same initial $\bar{q}$ inputs are applied as above:
\begin{subequations}\label{eq:OptProbZ}
\begin{align}
    J_{\mathcal{Z}}^* &= \min_{u_{\bar{q}-1}, \cdots, u_{N-1}}  \sum^{H-1}_{t=\bar{q}-1} c(z_t, u_t) + c_T(z_T), \\
    \text{s.t.} ~~ 
    \mathcal{Z}_{t} &= \mathcal{A}_{t-1}\mathcal{Z}_{t-1} + \mathcal{B}_{t-1}u_{t-1},\text{given}\   \mathcal{Z}_{\bar{q}-1} \label{eq:OptProbZ_dynamics}
\end{align}
\end{subequations}
where, the information-state is of order $\bar{q}$. The equivalence of the solution to the two problems is established in the following result.
\begin{theorem}\label{theorem.optimal_fb}
    Given the initial information-state $\mathcal{Z}_{\bar{q}-1}$ or equivalently the state $x_{\bar{q}-1}$, the solution to the optimal control problem (\ref{eq:OptProbX}) is identical to the solution to the optimal information state control problem (\ref{eq:OptProbZ}).
\end{theorem}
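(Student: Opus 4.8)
The plan is to show that the two problems~\eqref{eq:OptProbX} and~\eqref{eq:OptProbZ} are, in fact, the \emph{same} optimization problem: they are minimized over the identical decision variable (the control sequence $u_{\bar{q}-1},\dots,u_{N-1}$), over the same feasible set, and---crucially---with an identical objective viewed as a function of that control sequence. The only apparent difference between them is the dynamics used to generate the outputs $z_t$ that enter the running cost $c(z_t,u_t)$: problem~\eqref{eq:OptProbX} propagates the true state via~\eqref{eq:OptProbX_dynamics}, whereas problem~\eqref{eq:OptProbZ} propagates the information-state via~\eqref{eq:OptProbZ_dynamics}. Thus the whole argument reduces to establishing that, for \emph{any} fixed admissible control sequence, the two dynamics produce the same output trajectory $\{z_t\}$ over the horizon.

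First I would fix an arbitrary control sequence $(u_{\bar{q}-1},\dots,u_{N-1})$ and establish the consistency of the two initial conditions: by Lemma~\ref{lemma.lineartransformation} (together with the observability Assumption~\ref{assump.observability} and Remark~\ref{remark.III.1}) the given data $\mathcal{Z}_{\bar{q}-1}$ and $x_{\bar{q}-1}$ encode the same information, $x_{\bar{q}-1}=T_{\bar{q}-1}\mathcal{Z}_{\bar{q}-1}$, so the two problems start from matching states. The output $z_{\bar{q}-1}$ is simply the top block of $\mathcal{Z}_{\bar{q}-1}$ and hence agrees trivially. For every later time $t\ge\bar{q}$ we have $mt\ge m\bar{q}\ge n$, so Proposition~\ref{prop.model_equivalence} (which itself rests on Corollary~\ref{corollary_ls}) applies and guarantees that the output $z_t$ produced by the information-state model~\eqref{eq:OptProbZ_dynamics} equals the output $z_t$ produced by the true model~\eqref{eq:OptProbX_dynamics} under the same inputs and consistent initial condition. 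Consequently the entire output trajectory coincides, and therefore $\sum_{t=\bar{q}-1}^{H-1} c(z_t,u_t)+c_N(z_N)$ evaluated along~\eqref{eq:OptProbX_dynamics} equals the same sum evaluated along~\eqref{eq:OptProbZ_dynamics}.

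Having shown the two objectives are identical functions of the control sequence on a common feasible set, I would conclude that the two minimization problems have the same optimal value, $J_x^*=J_{\mathcal{Z}}^*$, and the same set of minimizing control sequences; in particular, because the costs are convex, a minimizer of one is a minimizer of the other. Finally, to phrase the conclusion at the level of feedback policies, I would use $x_t=T_t\mathcal{Z}_t$ from Lemma~\ref{lemma.lineartransformation} to rewrite the optimal state-feedback $u_t^*=\pi_x^t(x_t)$ of~\eqref{eq:OptProbX} as $u_t^*=\pi_x^t(T_t\mathcal{Z}_t)$, an admissible and optimal information-state feedback for~\eqref{eq:OptProbZ}; the two policies therefore agree under the linear change of coordinates.

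The step I expect to require the most care is the bookkeeping at the boundary of the horizon: one must verify that the window $[\bar{q}-1,\,N-1]$ is chosen precisely so that every output appearing in the running cost for $t\ge\bar{q}$ falls in the regime $mt\ge n$ where Proposition~\ref{prop.model_equivalence} is valid, while the single earlier output $z_{\bar{q}-1}$ is already fixed by the given initial information-state and so needs no appeal to the equivalence. A second, more conceptual, point to make watertight is that equality of the objective as a function of the decision variable really does force equality of the argmin sets, and not merely of the optimal values; this is immediate once the feasible sets are seen to coincide, but it should be stated explicitly rather than taken for granted.
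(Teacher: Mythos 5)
Your proposal is correct and rests on the same two pillars as the paper's proof --- Lemma~\ref{lemma.lineartransformation} to match the initial conditions ($x_{\bar{q}-1}=T_{\bar{q}-1}\mathcal{Z}_{\bar{q}-1}$) and Proposition~\ref{prop.model_equivalence} for the input--output equivalence of the two dynamics --- but it closes the argument by a genuinely different route. The paper finishes with a two-sided policy-exchange inequality: it applies the optimal state-feedback $\pi_x^t(\cdot)$ to the information-state system via $u_t=\pi_x^t(T_t\mathcal{Z}_t)$ to get $J_{\mathcal{Z}}^{\pi_{\mathcal{Z}}}\leq J_x^{\pi_x}$, then applies $\pi_{\mathcal{Z}}^t(\cdot)$ to the true system to get the reverse inequality, and concludes $J_x^{\pi_x}=J_{\mathcal{Z}}^{\pi_{\mathcal{Z}}}$. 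You instead observe that, since both problems are deterministic minimizations over the same open-loop sequence $(u_{\bar{q}-1},\dots,u_{N-1})$ on the same (unconstrained) feasible set, the output-trajectory equality makes the two objectives \emph{pointwise identical} functions of the decision variable, so the optimal values and the entire argmin sets coincide at once. Your argument is the more elementary and, for this deterministic open-loop formulation, arguably the cleaner one; it also delivers equality of minimizer \emph{sets} without invoking convexity (which you mention but do not actually need for that step). The paper's exchange argument is the more robust template: it survives settings where the optimization is genuinely over feedback policies or where stochasticity breaks the ``same function of the control sequence'' identification. Your closing remark about translating the conclusion to the policy level via $u_t^*=\pi_x^t(T_t\mathcal{Z}_t)$ is exactly the device the paper uses inside its inequality step, so the two proofs meet there. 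No gaps.
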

\begin{proof}
Consider the two problems in eq. \eqref{eq:OptProbX} and \eqref{eq:OptProbZ}.
To show the equivalence between the two problems, one needs to show: (i) the initial conditions are equivalent; (ii) the input-output response of the state-space system and information-state system are equal; (iii) the optimal feedback policies of both the systems generate the same control inputs.\\
Using lemma~\ref{lemma.lineartransformation}, the initial condition for the systems are related by $x_{\bar{q}-1} = T_{\bar{q}-1}\mathcal{Z}_{\bar{q}-1}$, where the information-state contains the inputs and outputs of the system at $t = 0, \cdots, \bar{q}-2 $. Proposition~\ref{prop.model_equivalence} shows the state-space model Eq.~\eqref{eq:OptProbX_dynamics} and information-state model Eq.~\eqref{eq:OptProbZ_dynamics} have the same input-output response. \\
\indent To show that the feedback policies generate the same control inputs,  let the optimal control policy mapping the state to the control at time $t$ for the state-space and information-space system be $\pi^t_x(\cdot)$, and $\pi_{\mathcal{Z}}^t(\cdot)$, respectively.     
The policy for the true full-state feedback $\pi_x^t(\cdot)$ can be applied to the information-state system, by using the transformation from lemma~\ref{lemma.lineartransformation}, $x_t =T_t \mathcal{Z}_{t}$, which gives the control input to be  $u_t = \pi_x^t(T_t \mathcal{Z}_{t})$. Then, 
because of optimality of the policy $\pi_{\mathcal{Z}}(\cdot)$ for the information-state system, the input-output equivalence of the systems in Eqs.~\eqref{eq:OptProbX_dynamics} and~\eqref{eq:OptProbZ_dynamics}, and the same cost functions, we get:
 \begin{align} \label{eq.prop5_1}
    J_{\mathcal{Z}}^{\pi_{\mathcal{Z}}} \leq J_x^{\pi_x}.
\end{align}
Similarly, the optimal information-state policy  $\pi_{\mathcal{Z}}(\cdot)$, can be applied to the true state-space model by taking the past inputs and outputs as the argument to the policy. Because of optimality of $\pi_x$ for the state space system, we get 
\begin{align}\label{eq.prop5_2}
    J_x^{\pi_x} \leq J_{\mathcal{Z}}^{\pi_{\mathcal{Z}}} 
\end{align}
From Eq.~\eqref{eq.prop5_1} and Eq.~\eqref{eq.prop5_2}, $J_x^{\pi_x} =J_{\mathcal{Z}}^{\pi_{\mathcal{Z}}} $. Hence, the optimal feedback on the information-state is identical to the optimal control for the underlying state-space system. 
\end{proof}
\begin{remark} \label{remark.2}
Please notice that the above two defined problems (Eq.~\eqref{eq:OptProbX} and Eq.~\eqref{eq:OptProbZ}) specify the initial state at $t = \bar{q}-1$ by assuming that the initial $\bar{q}-1$ control inputs $\{u_0, u_1, \cdots, u_{\bar{q}-2}\}$ are specified. Intuitively, one can see that these inputs are inherently ambiguous as they are needed to get enough information to reconstruct the initial state or form the initial information state and should not be included in the optimization. Thus, it is advisable to take small control perturbations to keep the system near the initial state till the state can be reconstructed at $t = \bar{q}-1$. Alternatively, one could use the ``best" observer to estimate the state of the system and use it for control. Owing to observability, no observer would be able to reconstruct the state exactly before $t = \bar{q} - 1$. Hence, the best observer based system will still generate optimal control inputs only from $t = \bar{q} - 1$, which the information-state based control also guarantees. Nonetheless, given that $\bar{q} \ll H$, we can expect that this initial transient should not affect the total cost significantly.
\end{remark}

As discussed in section~\ref{section:arma_ltv}, there might be multiple solutions for the ARMA parameters (Eq.~\eqref{eq.ls_eq_abbrev}), which will lead to multiple realizations of the information-state model. In the following result, we show that the realization does not matter when it comes to designing for the optimal control. 
\begin{proposition}
    For a specific value $q$, and given they start from the same initial condition, all realizations of the information-state model arising from different ARMA parameters that satisfy Eq. \eqref{eq.ls_eq_abbrev}, generate the same optimal feedback control. 
\end{proposition}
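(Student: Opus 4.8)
The plan is to show that although the least-squares problem~\eqref{eq.ls_eq_abbrev} admits many solutions, every admissible parameter set produces the \emph{same} predicted output along any physically realizable trajectory; the resulting information-state systems~\eqref{eq.info_state_model} are therefore input-output identical, pose identical optimal control problems, and hence yield the same optimal feedback. Let $\Theta^{(1)} = [\bar{\alpha}^{(1)}_t\ \bar{\beta}^{(1)}_t]$ and $\Theta^{(2)} = [\bar{\alpha}^{(2)}_t\ \bar{\beta}^{(2)}_t]$ be any two solutions, so that $\Theta^{(1)}\mathcal{X}^q_t = \Theta^{(2)}\mathcal{X}^q_t = Z^{(1:N)}_t$. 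First I would form the difference $\Delta := \Theta^{(1)} - \Theta^{(2)}$, which satisfies $\Delta\,\mathcal{X}^q_t = 0$; equivalently, every row of $\Delta$ is orthogonal to the column space $\mathrm{col}(\mathcal{X}^q_t)$.

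The central step is to argue that any regressor vector $w_{t-1}$, the column stacking $Z^q_{t-1}$ over $U^q_{t-1}$ that arises from an actual run of the true system~\eqref{eq.system}, lies in $\mathrm{col}(\mathcal{X}^q_t)$. This follows from Eq.~\eqref{eq.output_model}: such a $w_{t-1}$ is a linear image of the pair $(x_{t-q}, U^q_{t-1})$, so as the initial state and input history range freely the realizable $w_{t-1}$ sweep out a subspace of dimension exactly $n + rq$. Since $\mathcal{X}^q_t$ has rank $n + rq$ and its own columns are realizable vectors of this form, that subspace coincides with $\mathrm{col}(\mathcal{X}^q_t)$. Consequently $\Delta\,w_{t-1} = 0$, and therefore $\Theta^{(1)} w_{t-1} = \Theta^{(2)} w_{t-1}$, i.e. both parameter sets predict the identical $z_t$.

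Next I would propagate this through the dynamics by induction. Starting from the common initial condition $\mathcal{Z}_{\bar{q}-1}$ and applying a common input sequence, the two realizations of~\eqref{eq.info_state_model} generate the same $z_t$ at each step (by the previous paragraph), hence build the same next information-state $\mathcal{Z}_t$, so the entire output trajectories coincide. Because the cost in~\eqref{eq:OptProbZ} depends only on the inputs and the generated outputs, the two realizations define the \emph{same} cost functional of the input sequence, hence the same optimal control problem and the same minimizer. Invoking Theorem~\ref{theorem.optimal_fb}, each realization is in turn equivalent to the true-state problem~\eqref{eq:OptProbX}, so all realizations produce the same optimal feedback control.

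The hard part will be the middle step: rigorously justifying that reachable regressors lie in $\mathrm{col}(\mathcal{X}^q_t)$ and not merely in the ambient $\mathbb{R}^{mq+rq}$. The crux is the dimension count, namely that realizable $w_{t-1}$ span exactly an $(n+rq)$-dimensional subspace as $(x_{t-q}, U^q_{t-1})$ vary under Assumption~\ref{assump.observability}, matched against the rank $n+rq$ of the data matrix; this equality requires that the rollouts be persistently exciting enough that their columns span this entire subspace, which is precisely what the rank condition on $\mathcal{X}^q_t$ encodes.
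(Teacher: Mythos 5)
Your proof is correct, but it takes a genuinely different route from the paper's, and in one respect it is more complete. The paper's proof handles only the two named realizations (the ``fundamental'' solution~\eqref{eq.true_ls_sol} and the Moore--Penrose solution~\eqref{eq.ls_sol}), cites Corollary~\ref{corollary_ls} for their input-output equivalence, and then runs a policy-exchange argument: applying each realization's optimal policy to the other yields $\mathcal{J}^{(2),\bar{\mu}^*}\leq\mathcal{J}^{(1),\bar{\pi}^*}$ and conversely, so the optimal costs agree, and equality of the \emph{controls} is then extracted from uniqueness of the minimizer under convexity. You instead prove directly that \emph{every} solution of~\eqref{eq.ls_eq_abbrev} agrees on all realizable regressors, via the left-null-space argument and the dimension count (realizable regressors form the $(n+rq)$-dimensional image of $(x_{t-q},U^q_{t-1})$ under the map in~\eqref{eq.output_model}, which coincides with $\mathrm{col}(\mathcal{X}^q_t)$ when the data has rank $n+rq$). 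This is the right generality for the proposition as stated, which quantifies over all solutions, and it correctly isolates persistent excitation as the hypothesis doing the work. It also lets you bypass the policy exchange entirely: identical cost functionals of the input sequence have identical minimizer sets, with no appeal to convexity needed except to speak of ``the'' minimizer. What the paper's policy-exchange argument buys is robustness — it would still work if the two systems were merely input-output equivalent without defining literally the same functional — but that robustness is not needed here. One point to make explicit in your induction: to keep each successive regressor realizable you need each model's output to equal the \emph{true} system's output at every step (not merely that the two models agree with each other), so that the updated information state remains a segment of a genuine trajectory of~\eqref{eq.system}; this follows by applying your realizability argument to each realization individually, but as written the induction only invokes agreement between the two models.
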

\begin{proof}
Suppose there are two realizations of the information-state model, arising from using the ARMA model from Eqs.~\eqref{eq.true_ls_sol} and \eqref{eq.ls_sol}. Consider the following problems:
\begin{align}
    \mathcal{J}^{(1), \bar{\pi}^*} &= \min_{\bar{\pi}(\mathcal{Z}_{t})} \sum_{t=q-1}^{H-1} c(z_t,u_t) + c_N(z_N), \nonumber \\
    \text{subject to:} ~~\mathcal{Z}_{t} &= \mathcal{A}_{t-1}^{(1)}\mathcal{Z}_{t-1} + \mathcal{B}_{t-1}^{(1)} u_{t-1}, \text{given}\  \mathcal{Z}_{q-1} \nonumber \\
    u_t &= \bar{\pi}_{t}(\mathcal{Z}_{t}),  \nonumber
\end{align}
and, 
\begin{align}
    \mathcal{J}^{(2), \bar{\mu}^*} &= \min_{\bar{\mu}(\mathcal{Z}_{t})} \sum_{t=q-1}^{H-1} c(z_t,u_t) + c_N(z_N), \nonumber \\
    \text{subject to:} ~~\mathcal{Z}_{t} &= \mathcal{A}_{t-1}^{(2)}\mathcal{Z}_{t-1} + \mathcal{B}_{t-1}^{(2)} u_{t-1},  \mathcal{Z}_{q-1} \nonumber \\
    u_t &= \bar{\mu}_{t}(\mathcal{Z}_{t}). \nonumber
\end{align}

From Corollary 2, we know that different ARMA models predict the same output. Hence, the different information-state models also have the same input-output response, and also start from the same initial condition. To show they generate the same optimal control, consider applying the optimal policy of the first system $\bar{\pi}^*$ to the second system. Then we know, $\mathcal{J}^{(2), \bar{\mu}^*} \leq \mathcal{J}^{(2), \bar{\pi}^*} $, owing to optimality of $\bar{\mu}^*$ to the second system. But since both the models have the same input-ouput response and the cost functions, $\mathcal{J}^{(2), \bar{\pi}^*} = \mathcal{J}^{(1), \bar{\pi}^*}$. Hence,
\begin{align}
    \mathcal{J}^{(2), \bar{\mu}^*} \leq \mathcal{J}^{(1), \bar{\pi}^*}. 
\end{align}
Similarly, by applying the policy $\bar{\mu}^*$ to the first system, we can show that 
\begin{align}
    \mathcal{J}^{(1), \bar{\pi}^*} &\leq \mathcal{J}^{(2), \bar{\mu}^*} 
\end{align}
From the above two sets of equations, we can conclude that, 
\begin{align}
    \mathcal{J}^{(1), \bar{\pi}^*} &= \mathcal{J}^{(2), \bar{\mu}^*}
\end{align}
Since both problems have unique solutions, owing to the convexity of the problems, they should be the same control inputs for both realizations.
\end{proof}
We now find the optimal length $q$ for the information-state. 
\begin{proposition}
    The optimal value for the information-state length $q$, to solve the optimal control problem in Eq.~\eqref{eq.OCP}, is the minimum integer $q$ that satisfies $m q \geq n $.
\end{proposition}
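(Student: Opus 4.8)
The plan is to prove optimality of $\bar q := \min\{q : mq \ge n\}$ by a two-sided comparison: show that $q = \bar q$ already attains the fully observed optimal cost, while any $q < \bar q$ strictly loses optimality, so that $\bar q$ is the smallest, and hence the best, admissible length.

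For the first (easy) direction I would specialize Theorem~\ref{theorem.optimal_fb}. At $q = \bar q$ the observability Assumption~\ref{assump.observability} makes $O^{\bar q}_{t-1}$ full column rank $n$, Lemma~\ref{lemma.lineartransformation} then supplies the exact reconstruction $x_t = T_t \mathcal{Z}_t$, and the theorem gives $J_{\mathcal{Z}}^* = J_x^*$ after the initial transient. The identical argument holds for every $q \ge \bar q$, so all such lengths are equally optimal in cost; since a larger $q$ only inflates the information-state dimension $mq + r(q-1)$ and lengthens the $\bar q - 1$ step reconstruction transient discussed in Remark~\ref{remark.2}, without improving the objective, no $q > \bar q$ can beat $\bar q$. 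This already settles that $\bar q$ is no worse than any admissible length that exceeds it.

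The substantive direction is necessity: $q < \bar q$, i.e. $mq < n$, must give a strictly worse optimum. Here $O^q_{t-1}$ has only $mq < n$ rows, so its rank is below $n$ and the reconstruction underlying Lemma~\ref{lemma.lineartransformation} is no longer injective; there exist distinct trajectories $x_t \ne x_t'$ producing the identical information-state $\mathcal{Z}_t$. I would then argue that a control law constrained to be a function of $\mathcal{Z}_t$ cannot simultaneously be optimal for two full states that demand different optimal actions, forcing the information-state optimum to be strictly larger than the full-state optimum $J^*$.

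The main obstacle is precisely this necessity step, because, unlike the equivalence theorem (which only exploits that $\mathcal{Z}_t$ is a sufficient statistic when $mq \ge n$), I must convert the \emph{loss} of state information into a genuine loss of control performance rather than a mere loss of reconstruction accuracy. The cleanest route is to read the claim as a guarantee over the admissible convex costs and unknown initial conditions: to certify optimality for an arbitrary admissible cost one needs $\mathcal{Z}_t$ to determine $x_t$, and it suffices to exhibit a single quadratic instance, for example penalizing $\|z_t\|^2$ with a terminal term, for which the nontrivial kernel of the reconstruction map provably forces a strictly positive cost gap whenever $mq < n$. Combining this with the first direction pins the optimal information-state length at $\bar q$, the minimal integer with $mq \ge n$.
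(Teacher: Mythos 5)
Your first direction is, in substance, the paper's entire proof. The paper fixes $\bar q$, takes any $q' = \bar q + l$ with $l>0$, and compares the two total costs
$J_{\bar q} = \sum_{t=0}^{\bar q - 2} c(z_t,u_t) + J^*_{\mathcal{Z}^{\bar q}}$ and
$J_{q'} = \sum_{t=0}^{\bar q - 2} c(z_t,u_t) + \sum_{t=\bar q - 1}^{q'-2} c(z_t,u_t) + J^*_{\mathcal{Z}^{q'}}$:
the controls $u_{\bar q - 1},\dots,u_{q'-2}$ are merely ``specified'' in the second problem, because the longer information state cannot be formed until $t = q'-1$ (cf.\ Remark~\ref{remark.2}), whereas they are optimized over in the first, so $J_{\bar q} \le J_{q'}$ by Theorem~\ref{theorem.optimal_fb}. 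Your observation that a larger $q$ ``lengthens the transient without improving the objective'' is exactly this mechanism; to turn it into a proof you should write the cost decomposition above rather than appeal to the inflated state dimension $mq+r(q-1)$, which is irrelevant to the cost --- what matters is only the extra suboptimal waiting steps.

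The genuine gap is your necessity direction, and you flag it yourself as an unresolved obstacle. The paper neither proves nor needs the claim that $q<\bar q$ is strictly worse: for $mq<n$ the observability matrix $O^{q}_{t-1}$ cannot have rank $n$, so the ARMA model of Proposition~\ref{prop.1} and hence the information-state realization and the control problem built on it do not exist; the minimization over $q$ is implicitly restricted to admissible lengths $mq\ge n$, and the proposition is settled entirely by the comparison with larger $q$. Moreover, the repair you sketch would not close the gap even on its own terms: exhibiting a single quadratic instance with a positive cost gap does not establish the proposition for the \emph{given} cost in Eq.~\eqref{eq.OCP}, and for a particular cost that is insensitive to the directions lost in the non-injective reconstruction the gap can be zero, so no uniform strict-loss statement is available. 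As written, that half of your argument is a statement of intent rather than a proof; the half the paper actually relies on is correct but needs the explicit $J_{\bar q}\le J_{q'}$ bookkeeping.
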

\begin{proof}
Let $\bar{q}$ be the minimum value of $q$ that satisfies $mq \geq n$. Define $q' = \bar{q} + l$, where $l$ is any positive integer.
Consider Eq. \eqref{eq.decoupled OCP} which is the decoupled optimal control problem in Eq.~\eqref{eq.OCP}. 
Using Theorem~\ref{theorem.optimal_fb}, Eq. \eqref{eq.decoupled OCP} can be rewritten as 
\begin{align}
    J^* \leq \min_{u_0, \cdots, u_{\bar{q}-2}} \sum_{t=0}^{\bar{q}-2} c(z_t,u_t) + J^*_{\mathcal{Z}^{\bar{q}}},
\end{align}
where, 
\begin{align}
    J^*_{\mathcal{Z}^{\bar{q}}} &= \min_{u_{\bar{q}-1}, \cdots, u_{N-1}} \sum_{t=\bar{q}-1}^{H-1} c(z_t,u_t) + c_N(z_N), \nonumber \\
   \text{subject to:} ~~ \mathcal{Z}^{\bar{q}}_{t} &= \mathcal{A}^{\bar{q}}_{t-1}\mathcal{Z}^{\bar{q}}_{t-1} + \mathcal{B}^{\bar{q}}_{t-1}u_{t-1},  \text{given}\ \mathcal{Z}^{\bar{q}}_{\bar{q}-1}. \nonumber
\end{align}
As discussed in Remark \ref{remark.2}, the first few control inputs - $u_0, \cdots, u_{\bar{q}-2}$ - till the information-state is constructed are specified and are suboptimal. Define the total cost using these specified control inputs as
\begin{align}\label{eq.OCP q_bar}
    J_{\bar{q}} = \sum_{t=0}^{\bar{q}-2} c(z_t,u_t) + J^*_{\mathcal{Z}^{\bar{q}}}.
\end{align}
Similarly, the equivalent cost for information-state of length $q'$ is defined as
\begin{align}\label{eq.OCP q'}
    J_{q'} = \sum_{t=0}^{\bar{q}-2} c(z_t,u_t) + \sum_{t=\bar{q}-1}^{q'-2} c(z_t,u_t) + J^*_{\mathcal{Z}^{q'}},
\end{align}
where
\begin{align*}
    J^*_{\mathcal{Z}^{q'}} &= \min_{u_{q'-1}, \cdots, u_{N-1}} \sum_{t=q'-1}^{H-1} c(z_t,u_t) + c_N(z_N), \\
   \text{subject to:} ~~  \mathcal{Z}^{q'}_{t} &= \mathcal{A}^{q'}_{t-1}\mathcal{Z}^{q'}_{t-1} + \mathcal{B}^{q'}_{t-1}u_{t-1},  \text{given}\ \mathcal{Z}^{q'}_{q'-1}.
\end{align*}
Assuming $u_0, \cdots, u_{\bar{q}-2}$ are the same in Eqs. \eqref{eq.OCP q_bar} and \eqref{eq.OCP q'}, and since $u_{\bar{q}-1}, \cdots, u_{q'-2}$ are chosen optimally in Eq. \eqref{eq.OCP q_bar},  $J_{\bar{q}} \leq J_{q'}$. 
Hence, the information-state with the minimal length $\bar{q}$, results in a lower cost compared to a system with a larger initial information-state length, since the other system has to wait for additional steps to build the information-state. (If one were to use the the optimal control $u_{\bar{q}-1}, \cdots, u_{q'-2}$ for the system in eq. \eqref{eq.OCP q'}, then $J_{\bar{q}} = J_{q'}$.)
\end{proof}

\section{Identification under Noise}\label{sec.noise}
In this section, we consider the identification of the linear time-varying system under process noise through the control channel and measurement noise. We show that, given the knowledge of noise statistics, one can find an unbiased estimate of the ARMA parameters. 

Consider the following problem:
\begin{align}
    x_{t+1} &= A_{t} x_{t} + B_{t}(u_{t} + \omega_t), \nonumber \\
    z_t &= C_{t} x_{t} + \nu_t, \label{eq.system_noisy}
\end{align}
where, $\omega_t \sim \mathcal{N}(0,Q)$ and $\nu_t \sim \mathcal{N}(0,R)$. $\omega_t$ and $\nu_t$ are independent white noise sequences. To solve for the ARMA parameters without any error, one needs to solve the following linear problem:
\begin{align}\label{eq.ls_arma_noise}
    [\bar{z}^{(1)}_t \cdots \bar{z}^{(N)}_t] = [\bar{\alpha}_t ~ \bar{\beta}_t]
    \underbrace{\begin{bmatrix}
        \bar{Z}^{(1)}_{t-1} & \cdots & \bar{Z}^{(N)}_{t-1}\\
        \tilde{U}^{(1)}_{t-1} & \cdots & \tilde{U}^{(N)}_{t-1}
    \end{bmatrix}}_{\mathcal{X}_t},
\end{align}
where, $\bar{z}_t$ is the observation without noise, i.e. $\bar{z}_t = z_t - v_t$, $\tilde{u}_t = u_t + \omega_t$, $\bar{Z}_{t-1} = \tr{[\tr{\bar{z}}_{t-1}, \tr{\bar{z}}_{t-2}, \cdots, \tr{\bar{z}}_{t-q}]} $, $\tilde{U}_{t-1} = \tr{[\tr{\tilde{u}}_{t-1}, \tr{\tilde{u}}_{t-2}, \cdots, \tr{\tilde{u}}_{t-q}]}.$ Note that, one does not have access to $\bar{z}_t$ and $\tilde{u}_t$, as $v_t$ and $\omega_t$ are unknown. But, we will show that knowing $Q$, $R$, and the correlations between the input-output samples is sufficient to get the correct estimate for $\bar{\alpha}$ and $\bar{\beta}$, under the assumption that the number of samples $N$ is large. 
The result is stated below.

\begin{proposition}
The estimates of the ARMA parameters converge to their true values as the number of samples $N \to \infty$, under the assumption that the process noise acts through the control channel and the statistics of the Gaussian white noise acting on the control channel and measurements are known. 
\end{proposition}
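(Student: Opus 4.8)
The plan is to show that the naive least-squares estimate built from the noisy data matrix is asymptotically biased, identify the bias exactly in terms of the known noise statistics $Q$ and $R$, and then construct a corrected normal-equation system whose solution converges to the true ARMA parameters $(\bar\alpha_t, \bar\beta_t)$. The key observation is that the noise enters \eqref{eq.ls_arma_noise} both in the regressor matrix $\mathcal{X}_t$ (through $v_{t-k}$ in the past outputs and $\omega_{t-k}$ in the past inputs) and in the target $\bar z_t$; this is an errors-in-variables problem, so ordinary least squares is inconsistent and the correction must act on the second-moment (covariance) level rather than on individual samples.

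First I would write the \emph{measured} data in terms of the noise-free quantities: the observed regressor is $\hat{\mathcal{X}}_t = \mathcal{X}_t + E_t$, where $E_t$ collects the stacked measurement errors $v_{t-k}$ in the output block and the control-channel errors $\omega_{t-k}$ in the input block, and the measured target is $z_t = \bar z_t + v_t$. Since $\bar z_t = [\bar\alpha_t\ \bar\beta_t]\mathcal{X}_t$ exactly (Proposition on the fundamental ARMA solution applied to the noise-free system), the measured normal equations involve cross terms of the form $\tfrac1N z_t \hat{\mathcal{X}}_t^{\mathsf T}$ and $\tfrac1N \hat{\mathcal{X}}_t \hat{\mathcal{X}}_t^{\mathsf T}$. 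The plan is to take $N\to\infty$ and invoke the strong law of large numbers so that these empirical second moments converge almost surely to their expectations.

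Next I would compute those expectations. Because $\omega_t$ and $\nu_t$ are independent, zero-mean, white sequences with known covariances $Q$ and $R$, the expectation $\tfrac1N\,\mathbb{E}[\hat{\mathcal{X}}_t \hat{\mathcal{X}}_t^{\mathsf T}]$ splits into the true Gram term $\tfrac1N\,\mathbb{E}[\mathcal{X}_t\mathcal{X}_t^{\mathsf T}]$ plus a \emph{known} additive matrix $\Delta_t = \tfrac1N\,\mathbb{E}[E_t E_t^{\mathsf T}]$, which is block-diagonal-in-lag with the output block contributing copies of $R$ and the input block contributing copies of $Q$ (the cross terms between distinct lags and between $v$ and $\omega$ vanish by independence and whiteness). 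Similarly the target cross term picks up a known contribution from the correlation of $v_t$ with the regressor. The corrected estimator is then obtained by subtracting $\Delta_t$ (and the analogous target correction) from the empirical moments before inverting — effectively a bias-compensated least squares / total-least-squares-style normal equation. Solving this corrected system and letting $N\to\infty$ yields $(\bar\alpha_t,\bar\beta_t)$ in the limit, which is the claim.

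The main obstacle I anticipate is twofold. First, bookkeeping the correlation structure of $E_t$: because the regressor stacks \emph{overlapping} shifted windows of the same noise sequences, one must verify carefully which lagged cross-covariances survive and that the target's noise $v_t$ (at the current time) is uncorrelated with all entries of the past-output regressor, so that only the diagonal $R$/$Q$ blocks appear in $\Delta_t$. Second, and more delicate, is the invertibility and well-conditioning of the bias-corrected Gram matrix: since the noise-free $\mathcal{X}_t$ is row-rank-deficient (rank $n+rq < mq+rq$, as established earlier), the corrected second-moment matrix need not be invertible, so the argument must be carried out on the reduced, full-rank coordinates (e.g., projecting onto the range identified by the SVD of Section~\ref{sec.sol_arma}) exactly as in the noise-free least-squares solution \eqref{eq.ls_sol}, and consistency must be argued there. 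Establishing that the correction preserves this rank structure in the limit is where the real care is required.
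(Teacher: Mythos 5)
Your overall strategy --- pass to population second moments by the law of large numbers, correct those moments using the known $Q$ and $R$, and read off the parameters from the corrected normal equations --- is exactly the route the paper takes, and your worry about rank deficiency is legitimate (the paper writes $\inv{(\mathcal{X}_t \tr{\mathcal{X}}_t)}$ even though the population limit of $\tfrac{1}{N}\mathcal{X}_t\tr{\mathcal{X}}_t$ has rank only $n+rq$, so the inversion should really be a pseudo-inverse on the range, as in Eq.~\eqref{eq.ls_sol}). However, your treatment of the control channel is wrong in a way that would make the resulting estimator inconsistent. You model both noise sources symmetrically as an additive error $E_t$ on the observed regressor, independent of the true regressor, and propose to \emph{subtract} a block-diagonal $\Delta_t=\mathbb{E}[E_tE_t^{\mathsf T}]/N$ containing copies of $R$ (output block) and $Q$ (input block). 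This is correct only for the output block, where $z_{t-k}=\bar z_{t-k}+\nu_{t-k}$ with $\nu$ independent of $\bar z$. For the input block, the regressor that satisfies the exact ARMA relation is the input the plant actually receives, $\tilde u_{t-k}=u_{t-k}+\omega_{t-k}$, while what you record is the commanded $u_{t-k}$; writing $\hat{\mathcal X}_t=\mathcal X_t+E_t$ forces $E_t=-\omega_{t-k}$ there, and this ``error'' is \emph{not} uncorrelated with the true regressor $\tilde u_{t-k}$ (their cross-covariance is $-Q$). The correct population Gram is therefore $R_{\tilde U\tilde U}=R_{UU}+Q$ --- the measured input correlation must be \emph{inflated} by $Q$, not deflated --- and your subtraction of $\Delta_t$ would land at $R_{UU}-Q$, off by $2Q$. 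Likewise the output--input cross block does not vanish under your independence argument: $\bar z_{t-j}$ depends on $\omega_{t-k}$ for $k>j$, which is why the paper computes $R_{Z\tilde U}=R_{ZU}\inv{R_{UU}}(R_{UU}+I_{rq\times rq}Q)$ rather than leaving that block uncorrected. The two channels must be handled asymmetrically; a uniform errors-in-variables subtraction fails.

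A smaller inconsistency: you first assert that the target cross term ``picks up a known contribution from the correlation of $v_t$ with the regressor,'' and only later (in your obstacles paragraph) note that this correlation must be checked. It is in fact zero: $\nu_t$ is white and independent of $\omega$, hence uncorrelated with every regressor entry, all of which involve times $\le t-1$; this is precisely why the paper uses the raw correlations $R_{ZZ}(t,t-l)$ in Eq.~\eqref{eq.zX} with no correction on the target side (the only target-side adjustment needed is again on the input channel, converting $R_{ZU}(t,t-l)$ to $R_{Z\tilde U}(t,t-l)$).
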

\begin{proof}
The solution to Eq.~\eqref{eq.ls_arma_noise} is given by 
\begin{subequations}
\begin{align}
    [\hat{\alpha}_t ~ \hat{\beta}_t] &= [\bar{z}^{(1)}_t \cdots \bar{z}^{(N)}_t] \tr{\mathcal{X}}_t\inv{(\mathcal{X}_t \tr{\mathcal{X}}_t)},
 ~\text{where,} \\
     [\bar{z}^{(1)}_t \cdots \bar{z}^{(N)}_t] \tr{\mathcal{X}}_t &= N 
     \Big[R_{ZZ}(t, t-1) ~\cdots ~R_{ZZ}(t, t-q) \nonumber\\
     & R_{Z \tilde{U}}(t,t-1) \cdots R_{Z \tilde{U}}(t,t-q))\Big], \label{eq.zX} \\
     \inv{(\mathcal{X}_t \tr{\mathcal{X}}_t)} &= \frac{1}{N} 
     \begin{bmatrix}
         R_{ZZ} - I_{mq \times mq} R_{m \times m} &  R_{Z \tilde{U}} \\
         \tr{R_{Z \tilde{U}}} & R_{\tilde{U} \tilde{U}}
     \end{bmatrix} \label{eq.XX}.
 \end{align}
 \end{subequations}

 In the equations above,
 \begin{align*}
 R_{ZZ} &= \lim_{N \to \infty} \frac{1}{N} \sum_{i=1}^{N} Z_{t-1}^{(i)} Z_{t-1}^{\tr{(i)}} , \\
 R_{Z \tilde{U}} &= \lim_{N \to \infty} \frac{1}{N} \sum_{i=1}^{N} Z_{t-1}^{(i)} \tilde{U}_{t-1}^{\tr{(i)}},\\
R_{\tilde{U} \tilde{U}}  &= \lim_{N \to \infty} \frac{1}{N} \sum_{i=1}^{N} \tilde{U}_{t-1}^{(i)} \tilde{U}_{t-1}^{\tr{(i)}},\\
R_{ZZ}(t, t-l) &= \lim_{N \to \infty} \frac{1}{N} \sum_{i=1}^{N} z_{t}^{(i)} z_{t-l}^{\tr{(i)}},  \\
R_{Z \tilde{U}}(t, t-l) &= \lim_{N \to \infty} \frac{1}{N} \sum_{i=1}^{N} z_{t}^{(i)} \tr{(u_{t-l}^{(i)} + \omega_{t-l}^{(i)})}, \\
I_{mq \times mq} R_{m \times m} &= 
 \begin{bmatrix}
     R & 0_{m \times m} & \cdots & 0_{m \times m} \\
     0_{m \times m} & R & \cdots & 0_{m \times m} \\
     \vdots &  & \ddots    & \vdots \\
     0_{m \times m} & 0_{m \times m} & \cdots & R
 \end{bmatrix}.
 \end{align*}
Equations~\eqref{eq.zX} and \eqref{eq.XX} are calculated with the assumption that $N$ is large (law of large numbers).
To calculate the correlations, we use
\begin{align*}
    R_{Z \tilde{U}} &= R_{Z U} \inv{R_{UU}} (R_{UU} +I_{rq \times rq } Q), \\
    R_{\tilde{U} \tilde{U}} &= R_{UU} + I_{rq \times rq }Q, \\
    R_{Z \tilde{U}}(t, t-l) &= R_{Z U}(t,t-l) \inv{R_{UU}(t-l, t-l)} \\
                            & ~~~~~~~~~~~~~ (R_{UU}(t-l, t-l)+ Q).
\end{align*}
The quantities $R_{Z Z}, R_{Z U}$, and $R_{U U}$ are calculated from the input-output samples. 
\end{proof}
    
If the process noise acts directly on the system, the estimation of the ARMA parameters will have an unknown bias due to the unknown correlations between $z_t$ and $\omega_t$. This issue is discussed in detail here \cite{wang2023learning}.

\section{Relationship of the Information-State Approach to OKID}\label{section:okid}
The idea of using an ARMA model to describe the input-output data of an LTI system was first introduced in a series of papers related to the Observer/Kalman filter identification (OKID) algorithm~\cite{OKID,phan1993linearJOTA,juang1997deadbeat}, and the time-varying case was later considered in~\cite{TV_OKID}. The credit for using an ARMA model for system identification goes to the authors of the papers mentioned above, however, the explanation for the ARMA parameters given in their work is not exact, and does not apply in general as we will show empirically. This section will summarize the OKID algorithm and discuss why the information-state approach is computationally much simpler and the theory discussed in Section~\ref{section:arma_ltv} based on observability is the correct explanation for the ARMA parameters.

\subsection{The LTI Case.} \label{sec.OKID}
Let us consider an LTI system:
\begin{align}
    x_{t+1} &= A x_t + B u_t, \nonumber\\
    z_t &= C x_t, \label{eq.lti_system}
\end{align}
with state $x_t \in \mathbb{R}^{n}$, input $u_t \in \mathbb{R}^{r}$ and output $z_t \in \mathbb{R}^{m}$. The input-output description for system~\eqref{eq.lti_system} can be written as a function of the past inputs, a.k.a. a moving average model, after waiting long enough for the initial condition response to die out. If the system under consideration is lightly damped, then a long history of control inputs has to be used to predict the current output accurately, as $CA^iB \approx 0, \forall i\geq q$, only for very large values of $q$.   
To mitigate this issue, the OKID approach adds a ``hypothetical" observer $M$ to the system which results in:
\begin{align}
    x_{t+1} &= A x_t + B u_t + Mz_t - Mz_t, \nonumber\\
    x_{t+1} &= \underbrace{(A + MC)}_{\bar{A}} x_t + \underbrace{[(B), -M]}_{\bar{B}} \begin{bmatrix}
    u_t \\ z_t
    \end{bmatrix}.\label{eq.lti_system_obs}
\end{align}
The input-output description of the system in Eq.~\eqref{eq.lti_system_obs} can be written as in Eq.~\eqref{eq.okid_problem}.

\begin{align}\label{eq.okid_problem}
&\underbrace{\begin{bmatrix}
z_t^{(q)}& z_t^{(q+1)}& \cdots& z_t^{(N)}
\end{bmatrix}}_{Z_t}= \nonumber\\ 
&\underbrace{\begin{bmatrix}
C\bar{B}& C\bar{A}\bar{B}& \cdots& C\bar{A}^{q-1}\bar{B}
\end{bmatrix}}_{\bar{Y}} V, 
\end{align}
where $V$ is the data matrix having all the past controls and observations stacked accordingly. For $q$, such that $mq \geq n$, the current observation can be predicted from the past $q$ observations and controls, assuming $C\bar{A}^i\bar{B} = 0, \forall\ i\geq q$. Adding an observer moves the poles of $\bar{A}$ which dampens the system and greatly reduces the number of past inputs and observations needed. \\
\indent The least-squares solution to estimate $\bar{Y}$ - ``Observer Markov parameters" (as defined in \cite{OKID}) of this observer system is given by $\bar{Y} = Z_t V^{\dagger}$. The open-loop Markov parameters, $CA^tB$, are then calculated from $\bar{Y}$ (Eq. 20 in \cite{OKID}), which are then used to find $\hat{A},\hat{B},\hat{C}$ using the eigen realization algorithm (ERA)~\cite{juang1985eigensystem}. The observer $M$ is then calculated using
$
\hat{M} = \inv{(\tr{\mathcal{O}}\mathcal{O})}\tr{\mathcal{O}}Y^o,
$
where, $\mathcal{O}$ is the observability matrix, constructed using $\hat{A}, \hat{C}$ and $Y^o$ is a matrix containing the observer gain Markov parameters calculated using $\bar{Y}$ as defined in \cite{OKID}.\\ 
\indent OKID starts with the hypothesis that the ARMA parameters can be explained using the observer system~\eqref{eq.lti_system_obs}, in particular that the identified linear system $\hat{A}, \hat{B}, \hat{C}$, and the identified observer $\hat{M}$, explain the identified ARMA parameters, i.e., $\hat{C}(\hat{A} + \hat{M}\hat{C})^i[\hat{B},-\hat{M}] = \bar{Y}_i$. However, in general, the reconstructed observer Markov parameters using the identified $\hat{A}, \hat{B}, \hat{C}, \hat{M}$ don't match the ARMA parameters estimated from Eq.~\eqref{eq.okid_problem}, as shown in Fig.~\ref{fig:okid_sys}.  

\begin{remark}
The special case in which OKID's observer Markov parameters match the ARMA parameters is when the observer is deadbeat (Ch. 9.3.4 \cite{Antsaklis97}), i.e. eigenvalues of $\hat{A}+\hat{M}\hat{C}$ are at the origin, which results in $(\hat{A}+\hat{M}\hat{C})^q=0$. But the deadbeat condition is not satisfied in the general case.
\end{remark}

To summarize, OKID and our information-state approach both try to explain the ARMA model, and in particular, why only a finite past $q$ steps are enough. OKID tries to justify that the ARMA parameters can be modeled as an observer system, which our experiments contradict by showing that the observer doesn't accurately explain the ARMA parameters. While, the information-state approach, as discussed in Sec.~\ref{section:arma_ltv}, uses observability to implicitly reconstruct the state $q$ steps in the past, which helps it model both the initial condition response as well as the forced response exactly after an initial transient of $q$ steps. The reason the open-loop Markov parameters $CA^t B$ match the true Markov parameters in OKID, is not because of the hypothesized observer $M$, rather it is due to the development in the following Sec.~\ref{sec.deriv_markov}, which is based on the theory discussed in Sec.~\ref{section:arma_ltv}. Finally, the construction of the information-state model doesn't need any additional steps beyond the calculation of the ARMA parameters whereas OKID has to calculate the open-loop Markov parameters, calculate $\hat{A}, \hat{B}, \hat{C},$ by doing an SVD of the Hankel matrix, and then calculate $\hat{M}$ to construct the observer system.
\begin{figure}[!htbp]
\centering
   {\includegraphics[width=0.8\linewidth]{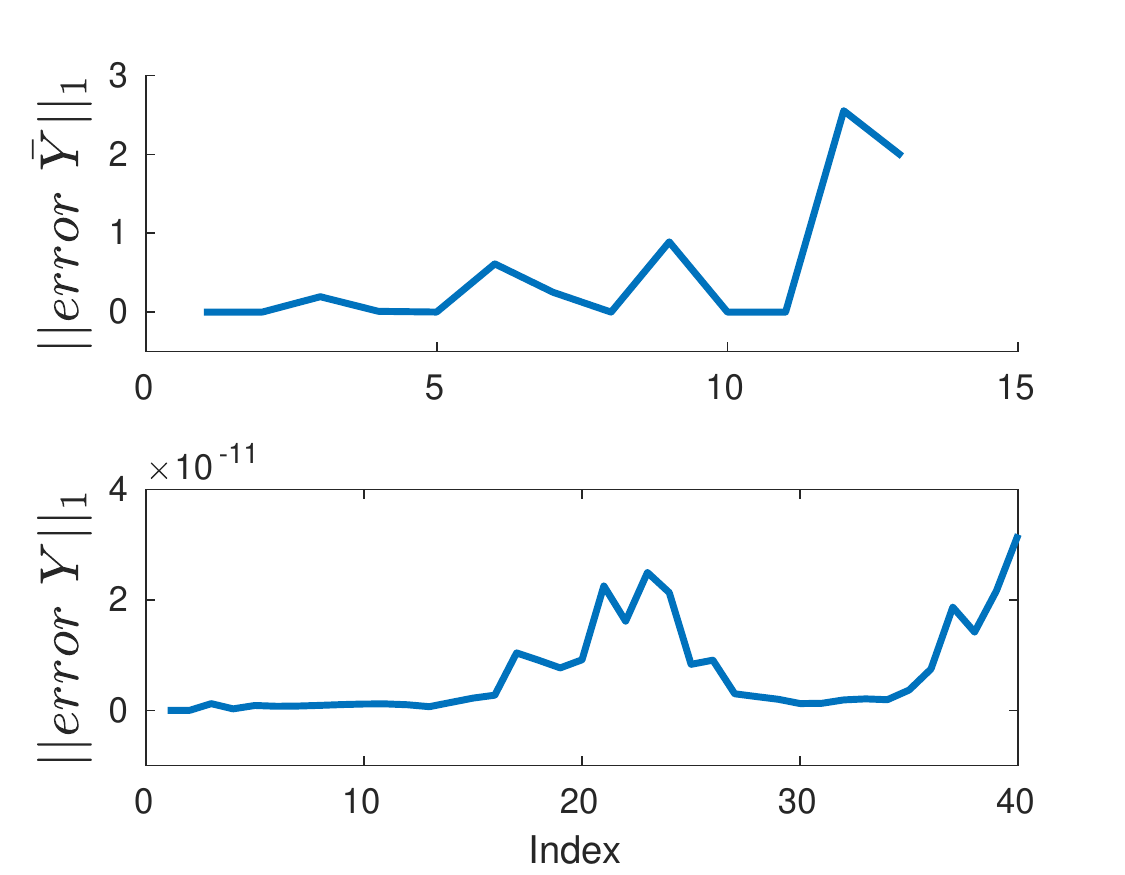} }
    \caption{The errors in the observer Markov parameters $\bar{Y}$ for $q=4$, and the open-loop Markov parameters $Y$ calculated on a 3-DoF Spring Mass system used in \cite{OKID} (system details are given in Table \ref{tab:sims}). The plotted errors are normalized by the true values and the 1-norm of the two channels is calculated. Error $Y$ is the error between the true open-loop Markov parameters of the system and the calculated open-loop parameters from the OKID algorithm. The error $\bar{Y}$ is the error between the estimated ARMA parameters ($\bar{Y} = Z_t V^{\dagger}$) and the observer Markov parameters reconstructed from the $\hat{A},\hat{B},\hat{C},\hat{M}$ resulting from the OKID algorithm. The takeaway from this experiment is that the open-loop Markov parameters and the calculated $\hat{A},\hat{B},\hat{C}$, match the true system, but the calculated observer $\hat{M}$ doesn't explain the ARMA parameters, as the reconstructed observer Markov parameters using $\hat{M}$ don't match the ARMA parameters. We also see that the reconstructed $C\bar{A}^q\bar{B}\neq0$ from the experiment, invalidates the deadbeat assumption.}
    \label{fig:okid_sys}
\end{figure}

\subsection{Calculation of the open-loop Markov parameters from the ARMA parameters}\label{sec.deriv_markov}
Restating Eq.~\eqref{eq.true_ls_sol} for the LTI case:
\begin{align}
    [\alpha_{1} \cdots \alpha_{q}] &= CA^q(O^q)^{\dagger} \nonumber \\
    [\beta_{1} \cdots \beta_{q}] &= ([C B, \cdots, C A^{q-1}B] -CA^q(O^q)^{\dagger} G^q) \label{eq.true_ls_sol_lti}
\end{align}
Substituting the first equation into the second and rearranging, we get, 
\begin{align}
    [C B, \cdots, C A^{q-1}B] = [\beta_{1} \cdots \beta_{q}] + [\alpha_{1} \cdots \alpha_{q}] G^q.
\end{align}
Then, 
\begin{align}
    Y_0 := CB &= \beta_1, \nonumber\\
    Y_1 : = CAB & = \beta_2 + \alpha_1 Y_0, \nonumber\\
    Y_{t-1} := CA^{k-1}B &= \beta_t +\sum_{i=1}^{k-1} \alpha_i Y_{t-i-1}.
\end{align}
The Markov parameters can be calculated to any length by setting $\alpha_t$ and $\beta_t$ to $0,\ \forall\ k > q$. The equations used to derive the open-loop Markov parameters from the ARMA parameters are identical to OKID, but they originate from Eq.~\eqref{eq.output_model} and~\eqref{eq.true_ls_sol}, which do not require OKID's hypothesized observer $M$.

\begin{table}[!htbp]
    \centering
    \resizebox{\columnwidth}{!}{%
    \begin{tabular}{|c|c|c|}
        \hline    
          \textbf{Requirements/Issues}& \textbf{TV-OKID} & \textbf{Information-state}  \\
         \hline
         ARMA parameters & Yes & Yes \\
         \hline
         Observability & Yes & Yes\\
         \hline
         $mq \geq n$ & Yes & Yes\\
         \hline
         $rq \geq n$ & Yes & No\\
         \hline
         Computing open-loop Markov & &\\
         parameters and Hankel & Yes & No \\
         \hline
         Experiments to be performed & & \\
         with zero-initial condition &Yes & No\\
         \hline 
         Free-response experiment & Yes &No \\
         \hline
         Coordinate transformations & Yes & No\\
         \hline
         Issue in calculating final $q$ steps & Yes & No\\
         \hline
    \end{tabular}}
    \caption{Difference between TV-OKID and Information-state model.}
    \label{tab:diff_okid_arma}
\end{table}
\subsection{LTV case - Time-varying Observer/Kalman Filter Identification:}\label{sec.TV_OKID}
The time-varying OKID algorithm (TV-OKID)~\cite{TV_OKID} generalizes the OKID approach by estimating the time-varying ARMA parameters, and goes on to calculate the open-loop Markov parameters from the TV-ARMA parameters. Using the open-loop Markov parameters, ERA is used to calculate $\hat{A}_t,\hat{B}_t,\hat{C}_t$, and the observer $\hat{M}_t = \mathcal{\hat{O}}_{t+1}^{\dagger}P_{t+1}$, where $P_{t+1}$ is a matrix containing the observer gain Markov parameters. The hypothesis is that the calculated system and observer fit the TV-ARMA parameters. The TV-OKID algorithm is complex compared to the time-invariant case, as the calculated $\hat{A}_t,\hat{B}_t,\hat{C}_t,\hat{M}_t$ matrices are in a different coordinate system at every time-step $t$. A coordinate transformation has to be performed to bring them to a reference coordinate system~\cite{TV_ERA}. The same issues discussed in the time-invariant case were seen in the time-varying case as well. The Markov parameters of the observer system do not match the TV-ARMA parameters. In addition to needing more computational steps to construct the model, it also requires the experiments to be carried out strictly from zero-initial conditions. In addition to that, another free-response experiment, which involves collecting responses of the system from random non-zero initial conditions with zero forcing, has to be performed to identify models for the first few time-steps and the last few time-steps. Finally, the last $q$ steps cannot be identified for finite-time problems, as the response has to be recorded for $q$ steps after the final time interval. \\
\indent The Information-state based model doesn't suffer from any of the above mentioned issues. It is simple, as the state-space model can be arrived at, immediately after calculating the TV-ARMA parameters, courtesy of the information-state. On the other hand, the TV-OKID has to compute the open-loop Markov parameters, build the Hankel matrix to calculate $\hat{A}_t, \hat{B}_t, \hat{C}_t$, and then transform them to a common reference coordinate system. For prediction, the initial condition has to be explicitly reconstructed after collecting observations in the first $q$ steps, for which it relies on observability - which the information-state model is inherently built upon. The additional requirements and issues pertaining to TV-OKID are tabulated in Table~\ref{tab:diff_okid_arma}.\\
\indent The objective of OKID/TV-OKID to find the open-loop parameters from the ARMA parameters might have been influenced from the traditional Kalman-Ho realization/ ERA procedure and the conventional definition of the state. In essence, the open-loop parameters model the impulse response/forced response, while, the ARMA captures both the initial condition response and the forced response, after an initial transient. The method used to calculate the open-loop parameters from ARMA parameters, strips away the initial condition response to arrive at the impulse response parameters. This is overcome in our method, by using the information-state and constructing the state-space model directly from the ARMA parameters.

\section{Empirical Results}\label{section:results}
In the following, we show the performance of the information state based LTV identification technique on a simple LTI system as well as complex finite time LTV problems obtained by linearizing nonlinear models around a nominal trajectory (Section~\ref{sec.id_exp}). We also show, in Section~\ref{sec.equivalence_exp}, the optimality of the information state feedback control as verification of the theory proposed in Section~\ref{sec.control_implications}. 
\subsection{Performance of the information-state based system identification}\label{sec.id_exp}
We tested the information-state technique on the oscillator system used in \cite{TV_ERA, TV_OKID} and on the cart-pole and fish systems available in the open-source MuJoCo simulator~\cite{mujoco}. The details of the systems are given in Table~\ref{tab:sims}. While the oscillator system is a true LTV system, the cart-pole and the fish are nonlinear systems. \\
\begin{figure}[!htbp]
\centering
    \subfloat{\includegraphics[width=0.48\linewidth]{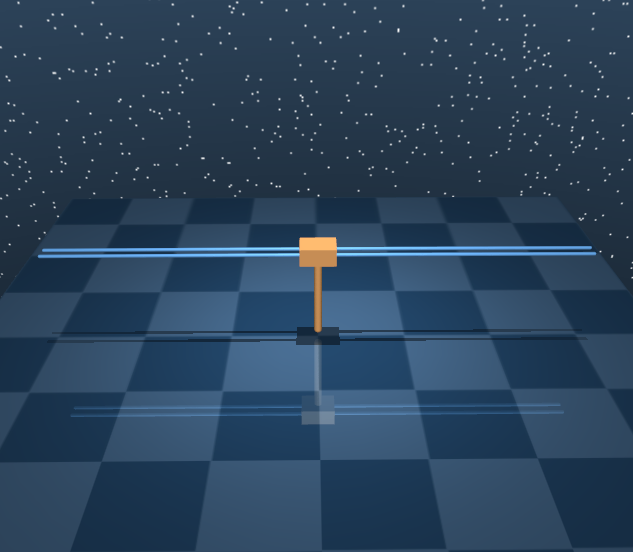}}
      \subfloat{\includegraphics[width=0.48\linewidth]{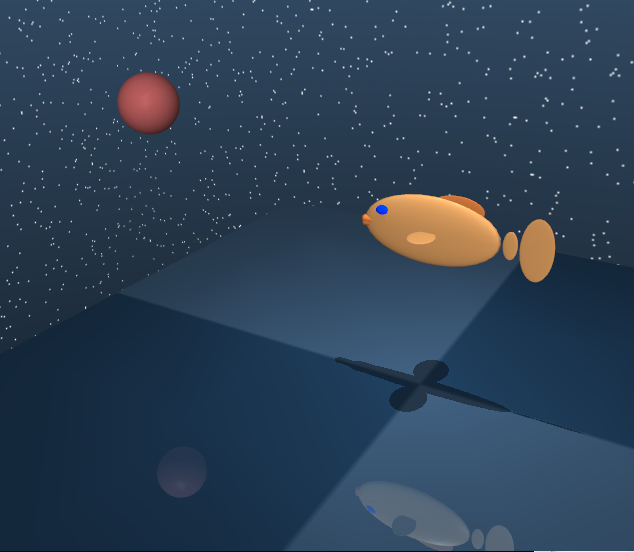}}
\caption{Models simulated in MuJoCo.}
\label{fig:mujoco_models}
\end{figure}
\begin{table}[!htbp]
    \centering
    \resizebox{\columnwidth}{!}{%
    \begin{tabular}{|c|c|c|c|c|c|}
    \hline
         \textbf{System}& \textbf{Horizon} & $q$ & \textbf{State} & \textbf{Output} & \textbf{Input}\\
         & &  & \textbf{dim.} ($n$) & \textbf{dim.} ($m$) & \textbf{dim.} ($r$)  \\
         \hline
         3 DoF Spring- &  &  & && \\ 
         mass (LTI) \cite{OKID} & 40 & 4 & 6 & 2 & 1\\
         \hline
         Oscillator \cite{TV_ERA} & 30 & 4 & 4 & 2 & 2\\
         \hline
         Cart-pole & 31 & 4 & 4 & 2& 1\\
         \hline
         Fish & 30 & 5 & 27 & 11 & 6\\
         \hline
    \end{tabular}}
    \caption{Simulation details.}
    \label{tab:sims}
\end{table}
\begin{figure}[!htbp]
\centering
    \subfloat[Cart-pole]{\includegraphics[width=0.48\linewidth]{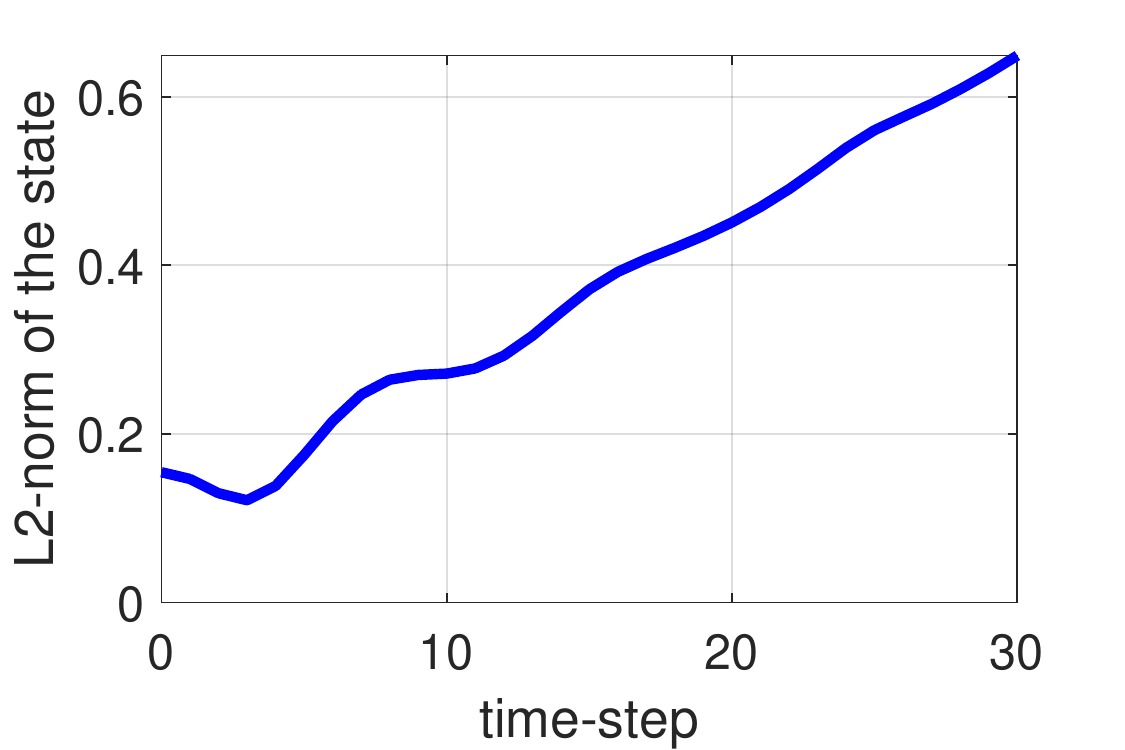}}
      \subfloat[Fish]{\includegraphics[width=0.48\linewidth]{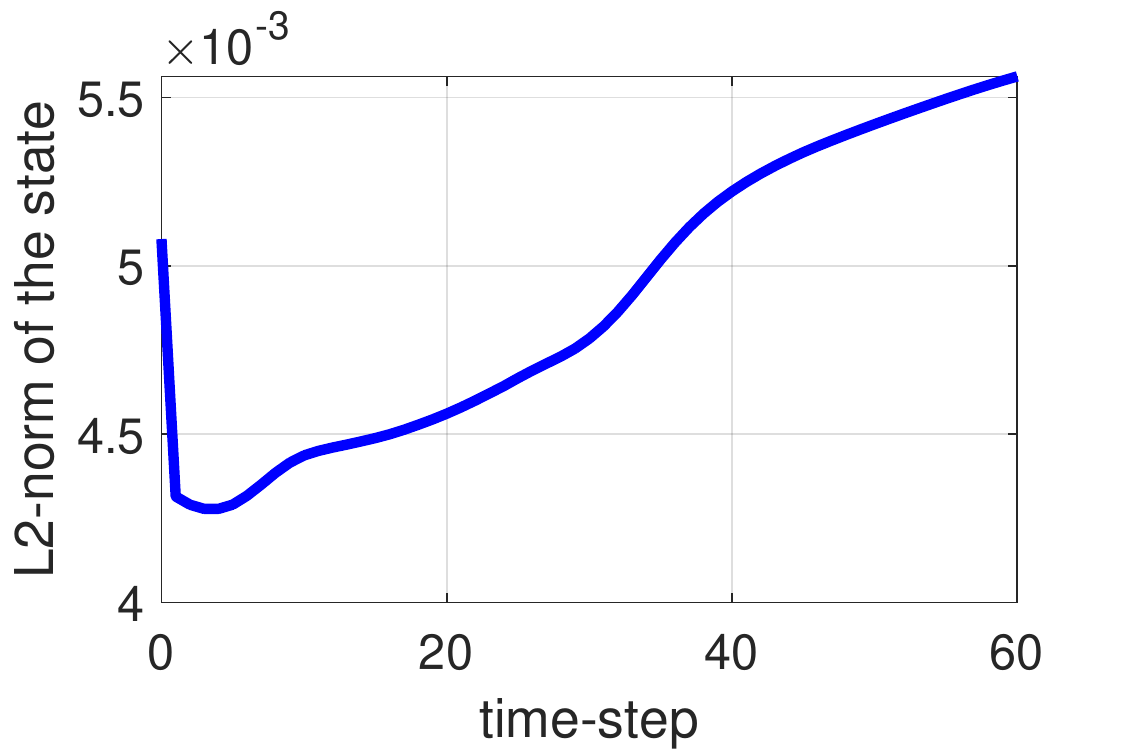}}
\caption{An illustration of the fact that non-zero initial conditions don't decay to 0, in general. We take non-zero initial conditions and let them evolve with time with only the nominal control. The plot shows that the error due to the non-zero initial conditions doesn't decay to 0 in finite time. It justifies the importance of having a system identification technique that doesn't require the initial conditions to decay to 0.} 
\label{fig:initial_condition}
\end{figure}
\indent For the oscillator, the identification is straightforward. The system was perturbed with random inputs from a normal distribution and the corresponding output responses were recorded.  For nonlinear systems, system identification has to be carried out along a trajectory about which the system can be linearized. Given a nonlinear system of the form $x_{t+1} = F(x_t,u_t),\ z_t = C_t x_t$, it can be linearized along a trajectory $\{ \bar{x}_t, \bar{u}_t \}^{T}_{t=0}$. The resulting LTV system is given by: $\delta x_{t+1} \approx A_t \delta x_{t} + B_t \delta u_t$, where $\delta x_t = x_t - \bar{x}_t$, $\delta u_t = u_t - \bar{u}_t$ and $A_t = \frac{\partial F}{\partial x}|_{\bar{x},\bar{u}}$, $B_t = \frac{\partial F}{\partial u}|_{\bar{x},\bar{u}}$. And, $\delta z_t = C_t \delta x_t.$ The nominal trajectory was computed using the iLQR algorithm~\cite{POD2C} and the perturbed system is identified around that trajectory using the input-output responses for every $t$ from $N$ independent experiments/rollouts $\{\delta u_t^{(1:N)}, \delta z_t^{(1:N)} \}$. \\
\indent Using the input-output responses from the experiments, the TV-ARMA parameters were estimated, from which the information-state based state-space model (Eq.~\eqref{eq.info_state_model}) is constructed. TV-OKID was also implemented to compare its performance. In addition to the forced response experiment, TV-OKID requires a free-response experiment. So, a set of random initial conditions are taken for the system under consideration, and the unforced responses are recorded to identify the system matrices for the first $q$ steps. TV-OKID algorithm computes the system matrices $\hat{A}_t,\hat{B}_t,\hat{C}_t,\hat{M}_t$ from the TV-ARMA parameters and the free-response experiment, using the procedure detailed in~\cite{TV_OKID}. The initial state is estimated using the observations from the first $q$ steps. 

First, we show the importance of having a system identification technique that is immune to non-zero initial conditions. We show in Fig. \ref{fig:initial_condition} that the non-zero initial conditions, in general, don't decay to zero in finite time. Next, we discuss the performance comparison of TV-OKID and the Information-state approach. For the oscillator, two experiments were performed - one with zero-initial conditions and another with non-zero initial conditions, and the results are shown in Fig.~\ref{fig:oscillator}. The results for the nonlinear systems are shown in Fig~\ref{fig:mujoco_sims}. The error shown in the figures is the  1-norm of the mean error between the true response and the predicted response from 100 independent simulations, across all the output channels. \\
\begin{figure}[!htbp]
    \centering
   \subfloat[Model calculated using \\responses from zero \\initial conditions]{\includegraphics[width=0.48\linewidth]{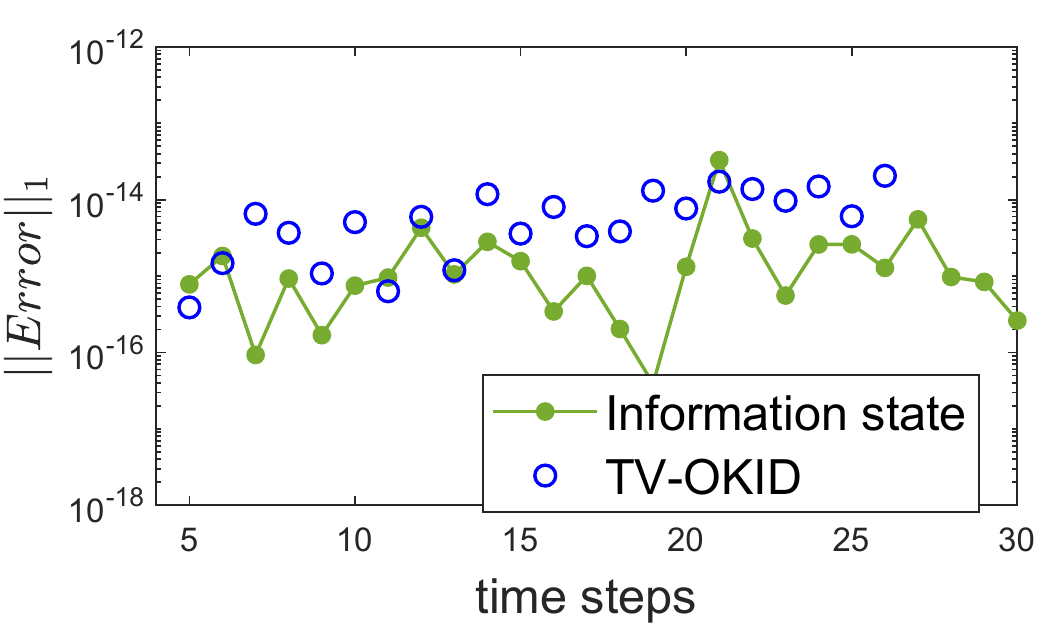} \label{fig:oscillator_q_4_zeroini}}
    \subfloat[Model calculated using \\ responses from non-zero \\initial conditions] {\includegraphics[width=0.48\linewidth]{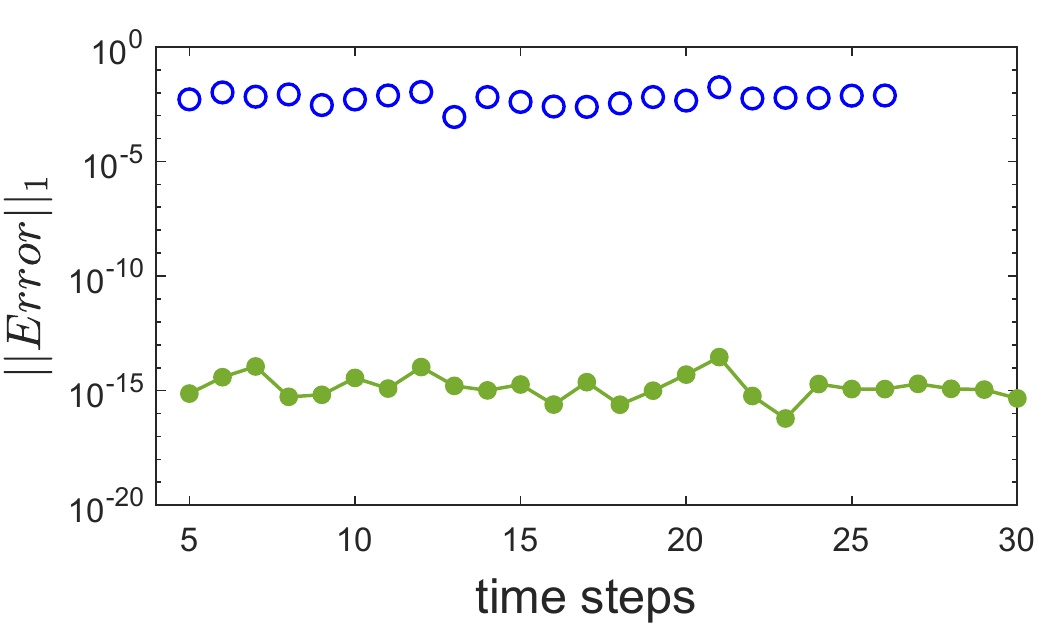} \label{fig:oscillator_q_4_nonzero_ini}}%
    \caption{Results on the oscillator model. OKID requires the data used for identification to have zero-initial conditions, to calculate the first few Markov parameters accurately. 
    }
    \label{fig:oscillator}
\end{figure}
\begin{figure}[!htbp]
   \subfloat[Cart-pole with zero \\initial conditions]{\includegraphics[width=0.48\linewidth]{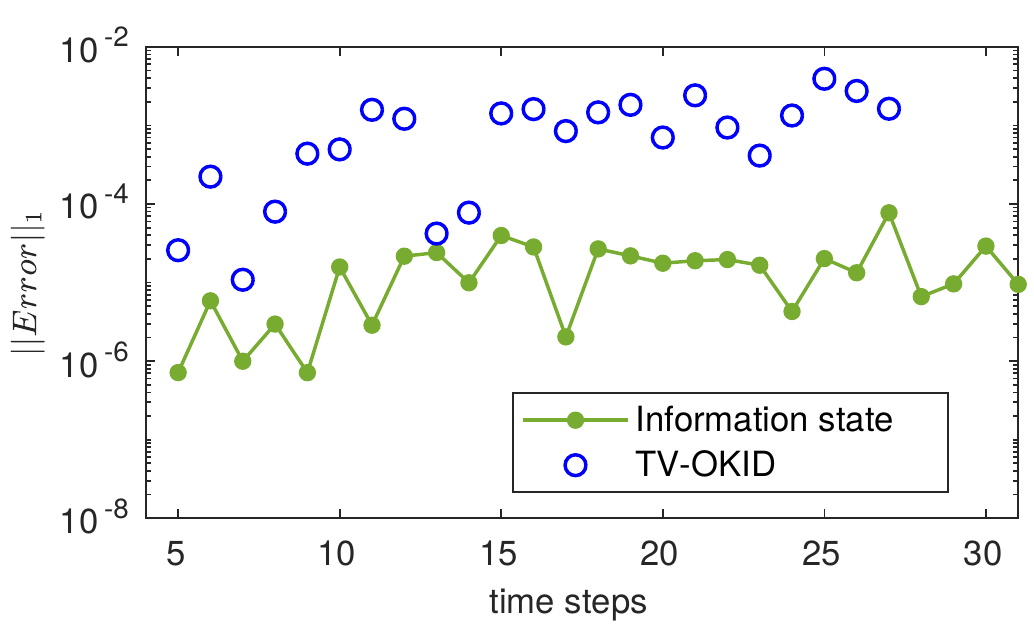} \label{fig:cartpole}}%
    \subfloat[Cart-pole with non-zero \\initial conditions]{\includegraphics[width=0.48\linewidth]{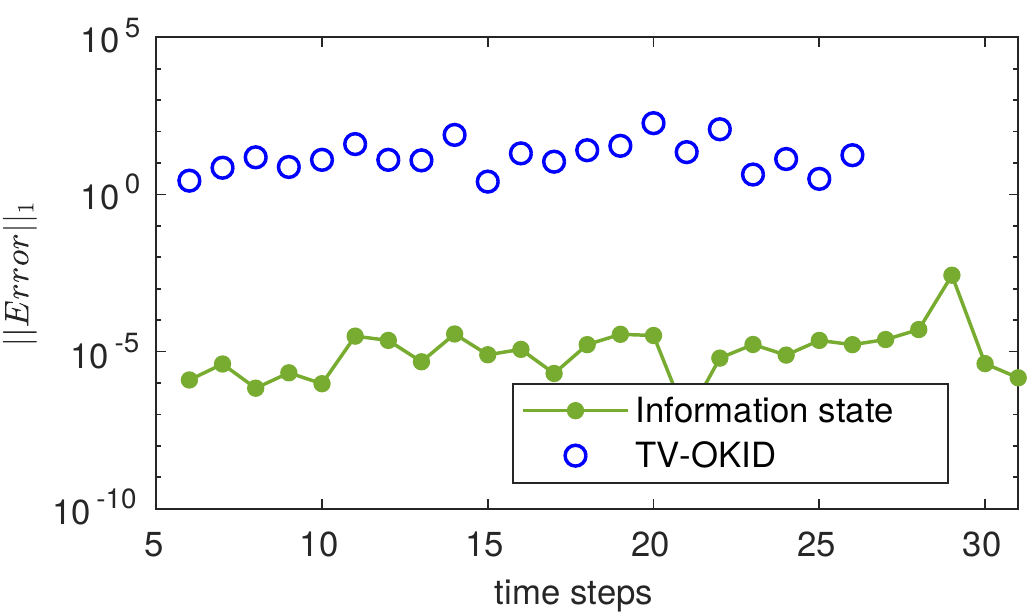}\label{fig:cartpole1}}%
    \newline
    \subfloat[Fish with zero \\initial conditions] {\includegraphics[width=0.48\linewidth]{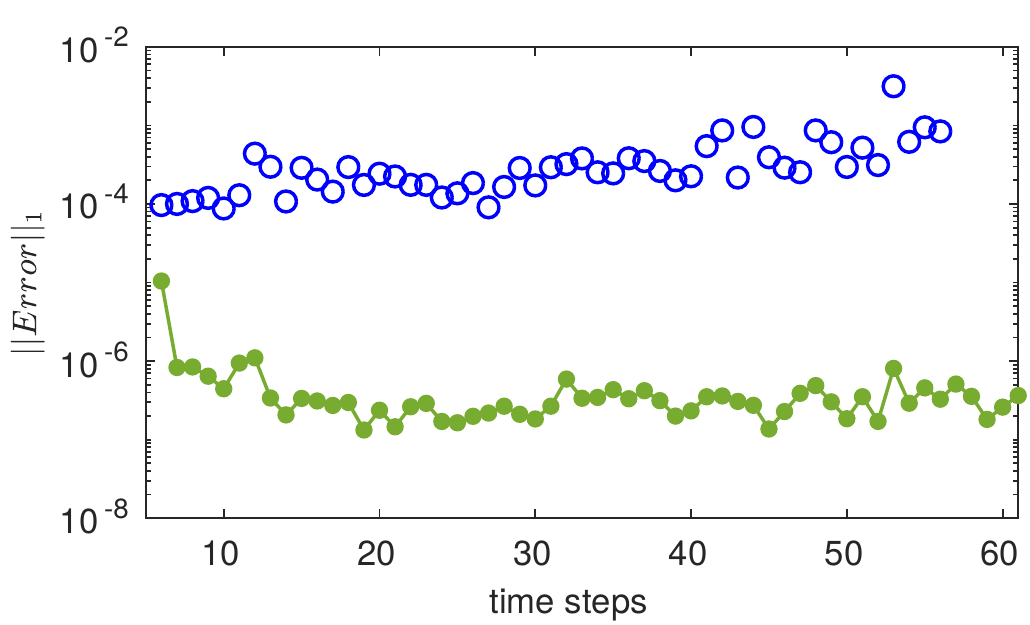} \label{fig:fish}}%
     \subfloat[Fish with non-zero \\initial conditions] {\includegraphics[width=0.48\linewidth]{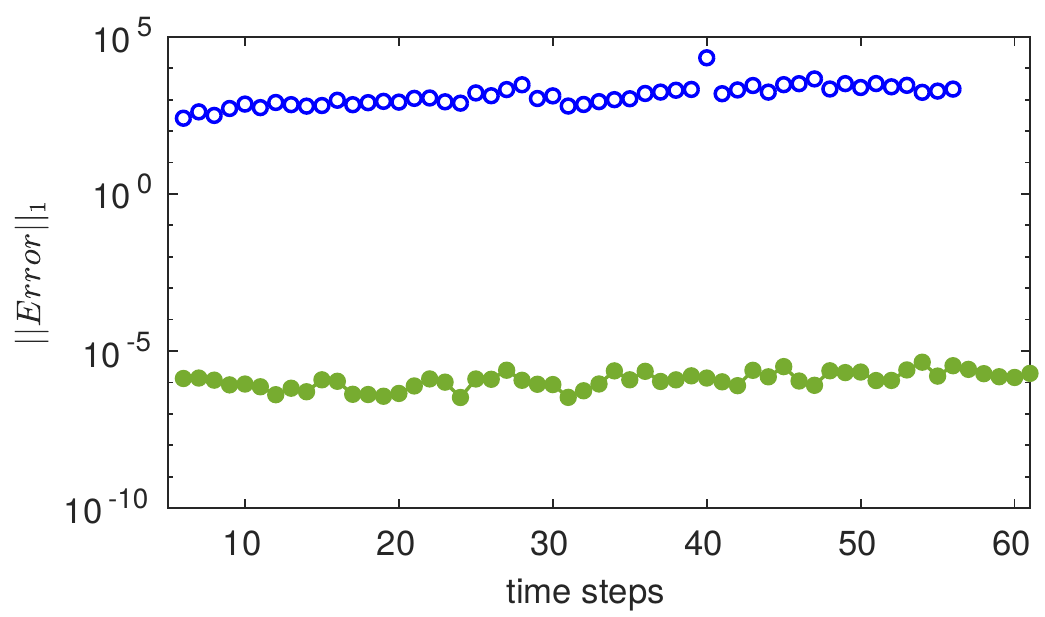} \label{fig:fish1}}%
    \caption{Results on the nonlinear models. The experiments for identifying the system were performed from zero-initial conditions and non-zero initial conditions. The identified observer in both cases is sensitive. Even small errors between the true response and the predicted response in the first few steps are amplified, leading to instability. Hence, we plot only the TV-OKID without the observer.}
    \label{fig:mujoco_sims}
\end{figure}
The results show that the information-state model can predict the responses accurately. The TV-OKID approach also can predict the response well in the oscillator experiment when the experiments have zero initial conditions, but it suffers from inaccuracy if the experiments have non-zero initial conditions as seen in Fig.~\ref{fig:oscillator_q_4_nonzero_ini}. In the case of fish and cart-pole, TV-OKID fails with the observer in the loop. We found that the identified open-loop Markov parameters predict the response well, but the prediction diverges from the truth when the observer is introduced, making the predictions useless. This observation further validates the hypothesis that the ARMA model cannot be explained by an observer in the loop system. Hence, we use only the estimated open-loop Markov parameters without the observer to show the performance of the TV-OKID prediction. The last $q$ steps in OKID are ignored, as there is not sufficient data to calculate models for the last few steps, as discussed in Sec.~\ref{sec.TV_OKID}. There is also the potential for numerical errors to creep in due to the additional steps taken in TV-OKID: determination of the time-varying Markov parameters from the time-varying observer Markov parameters, calculating the SVD of the resulting Hankel matrices and the calculation of the system matrices from these SVDs, as mentioned in~\cite{TV_OKID}. On the other hand, the effort required to identify systems using the information-state approach is negligible compared to other techniques as the state-space model can be set up by just using the ARMA parameters. More examples can be found in \cite{Ran_ICRA21}, where the authors use the information-state model for optimal feedback control synthesis in complex nonlinear systems.

\subsection{Optimality of feedback control using information-state}\label{sec.equivalence_exp}
In the following, we present empirical results to show the optimality of feedback control using information-state system as discussed in Section~\ref{sec.control_implications}. We show here that the optimal control using information-state system is exactly the same as the optimal control using the true state-space system, as proved in Theorem \ref{theorem.optimal_fb}. In Fig. \ref{fig:fb_control}, we show results on the 3 DoF spring-mass system. The optimal control problem is to take the system from a non-zero initial condition to the origin by minimizing a quadratic cost function - $\min_{\{u_t\}} \sum_{t=0}^{H-1} \frac{1}{2}(\tr{z}_t Q z_t + \tr{u}_t R u_t) + \frac{1}{2}\tr{z}_T Q_T z_T$. The minimal value $\bar{q}$ is taken for the information-state system and the first $\bar{q}$ control inputs are taken arbitrarily to construct the initial state for both systems. A finite horizon linear quadratic regulator is designed for both systems for a horizon of $50$ time steps. The control inputs and the responses of both the systems are identical as shown by the negligible errors between the two in Fig.~\ref{fig:fb_control}, supporting our theory that the systems are equivalent and generate the same optimal control inputs.

\begin{figure}
    \centering
    \includegraphics[width =0.7\linewidth]{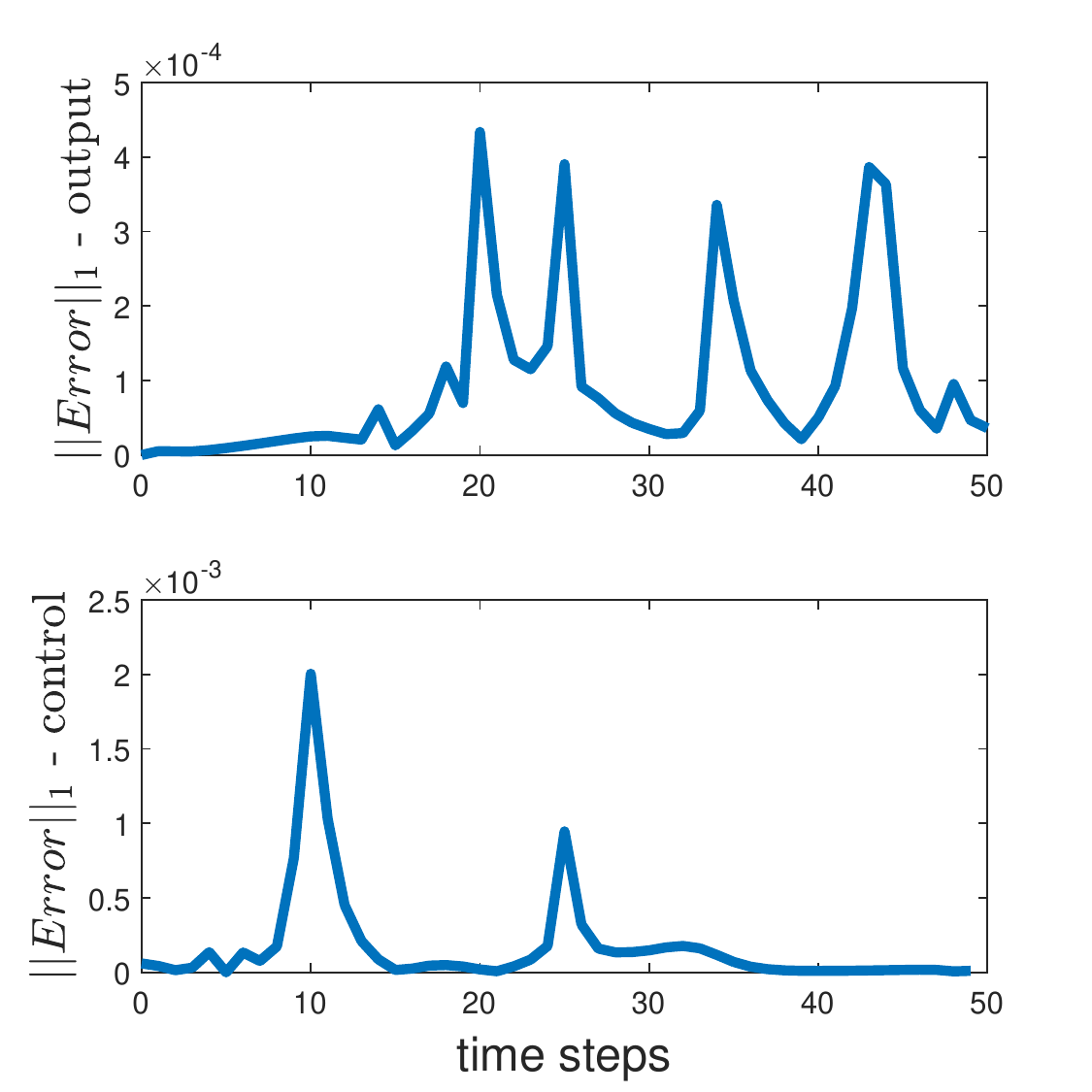}
    \caption{Optimal feedback control on the 3 DoF spring-mass (LTI) system. The error plotted is the $L_1$ error between the response of the information-state system and the true system, normalized by the true system's response. The difference in cost between the information-state controller and the state-space controller is $3.5666 \times 10^{-8}$ times the cost of the state-space controller, which is negligible.}
    \label{fig:fb_control}
\end{figure}

\section{Conclusion}\label{section:Conclusion}
This paper describes a new system realization technique for the system identification of linear time-invariant as well as time-varying systems. The system identification method proceeds by modeling the current output of the system using an ARMA model comprising of the finite past outputs and inputs. A theory based on linear observability is developed to justify the usage of an ARMA model, which also provides the minimum number of inputs and outputs required from history for the model to fit the data exactly. The method uses the information-state, which simply comprises of the finite past inputs and outputs, to realize a state-space model directly from the ARMA parameters. This is shown to be universal for both linear time-invariant and time-varying systems that satisfy the observability assumption. Further, we show that feedback control based on the minimal information state is optimal for the underlying state space system, i.e., the information state is indeed a loss-less representation for the purpose of control.  The method is tested on various systems in simulation, and the results show that the models are accurately identified. 

\printbibliography
\end{document}